\documentclass[11pt]{article}
\usepackage{amsmath}
\usepackage{amssymb}
\usepackage{amsthm}
\usepackage{braket}
\usepackage[backend=biber,style=alphabetic,maxbibnames=99]{biblatex}
\usepackage{hyperref}
\usepackage{mathtools}
\usepackage{mathrsfs}

\usepackage{appendix}

\usepackage[hmargin=1in,bmargin=1.4in,tmargin=1in]{geometry}

\usepackage{graphicx}
\newcommand{\bigboxtimes}{\mathop{\raisebox{-0.35em}{\scalebox{1.75}{$\boxtimes$}}}}

\pagestyle{plain}

\setlength{\abovedisplayskip}{1em}
\setlength{\belowdisplayskip}{1em}

\theoremstyle{plain}
\newtheorem{theorem}{Theorem}
\newtheorem{lemma}[theorem]{Lemma}
\newtheorem{proposition}[theorem]{Proposition}

\theoremstyle{definition}
\newtheorem{definition}[theorem]{Definition}
\newtheorem{warning}[theorem]{Warning}

\theoremstyle{remark}

\newtheorem{remark}[theorem]{Remark}
\newtheorem{example}[theorem]{Example}

\allowdisplaybreaks

\addbibresource{bib.bib}

\title{A convergent hierarchy of spectral gap certificates for qubit Hamiltonians}
\author{Sujit Rao}
\date{}
\begin{document}

\maketitle

\begin{abstract}
We give a convergent hierarchy of SDP certificates for bounding the spectral gap of local qubit Hamiltonians from below. Our approach is based on the NPA hierarchy applied to a polynomially-sized system of constraints defining the universal enveloping algebra of the Lie algebra $\mathfrak{su}(2^{n})$, as well as additional constraints which put restrictions on the corresponding representations of the algebra. We also use as input an upper bound on the ground state energy, either using a hierarchy introduced by Fawzi, Fawzi, and Scalet, or an analog for qubit Hamiltonians of the Lasserre hierarchy of upper bounds introduced by Klep, Magron, Mass\'{e}, and Vol\v{c}i\v{c}. The convergence of the certificates does not require that the Hamiltonian be frustration-free.

We prove that the resulting certificates have polynomial size at fixed degree and converge asymptotically (in fact, at level $n$), by showing that all allowed representations of the algebra correspond to the second exterior power $\wedge^2(\mathbb{C}^{2^n})$, which encodes the sum of the two smallest eigenvalues of the original Hamiltonian. We also give an example showing that for a commuting 1-local Hamiltonian, the hierarchy certifies a nontrivial lower bound on the spectral gap.
\end{abstract}

\section{Introduction and motivation}

Certifying the presence of a nonzero spectral gap for local quantum qubit Hamiltonians a fundamental problem in quantum many-body physics. A positive gap controls correlation decay and stability of phases. For 1D and 2D systems in particular, the existence of a gap can imply strong properties about entanglement in the system, such as the existence of an area law. Such properties are relevant for the design of algorithms for many-body systems, which in some cases use limitations on entanglement in the ground state to output good classical approximations.

However, proving rigorous lower bounds on the spectral gap is often difficult. From a computational complexity viewpoint, Gharibian and Yirka proved that computing the spectral gap of a local qubit Hamiltonian is hard for the complexity class $\mathsf{P}^{\mathsf{UQMA}[\log]}$ \cite{gy2019}. In certain models of infinite-size systems, computing the spectral gap is known to be undecidable \cite{cpw2015, bclp2020}. From a condensed matter physics viewpoint, several well-known open problems, such as the Haldane conjecture \cite{h1983, h1983a, h2016}, amount to showing that particular quantum systems are either gapped or gapless.

The goal of this paper is to present a \emph{convergent} hierarchy of semidefinite programming (SDP) certificates that can be used to certify lower bounds on spectral gaps of qubit Hamiltonians which may be frustrated. The hierarchy is constructed from a noncommutative polynomial system which bounds the sum of the first two eigenvalues of a given local Hamiltonian. We also use a family of upper-bound SDPs to bound the first eigenvalue of the Hamiltonian from above. We then deduce lower bounds on the spectral gap by combining the bounds output by the two hierarchies.

\subsection{Previous work}
There has been some recent work on approaches for giving SDP-based bounds on the spectral gap of spin systems. The spectral bootstrap method of \cite{nx2023} uses an approach inspired by the bootstrap method in conformal field theory to give \textit{upper} bounds on the spectral gaps of translationally-invariant qubit systems. The approach allows one to more generally deduce constraints on the matrix elements of local operators expanded in the eigenbasis of the system.

A recent SDP hierarchy for producing \emph{lower} bounds on gaps of frustration-free Hamiltonians was proposed in \cite{rketsb2024}. If the exact ground energy $\lambda_{0}(H)$ of the Hamiltonian is known, then a gap of at least $\gamma$ is equivalent to the operator $(H - \lambda_{0}(H))(H - \gamma - \lambda_{0}(H))$ being positive. One can then try to certify a lower bound on the spectral gap by searching over low-degree sum-of-squares certificates that this operator is positive. If $\lambda_{0}(H)$ is known, then the left-hand side becomes linear in $\gamma$ and thus amenable to SDP approaches based on the noncommutative sum-of-squares hierarchy. The same work also proposed several approximations and simplifications for translationally-invariant qubit systems on a lattice.

\subsection{Results}
We propose a hierarchy of SDP-based certificates for proving lower bounds on the spectral gaps of quantum qubit systems on a finite lattice. We first define a noncommutative polynomial optimization problem whose optimal value is equal to the sum of the first two eigenvalues of a given local qubit Hamiltonian, which can then be approximated using the NPA hierarchy \cite{npa2010a, npa2010b}. We then use a variant of the Lasserre hierarchy of upper bounds on polynomial minimization problems, introduced by Klep, Magron, Mass\'{e}, and Vol\v{c}i\v{c} in \cite{kmmv2025}, to bound the ground state energy of a local Hamiltonian from above. Taking an appropriate linear combination then gives a certified lower bound on the spectral gap of the Hamiltonian.

We prove both asymptotic convergence of the hierarchy and note that standard arguments imply that convergence in fact occurs at level $n$. The proof that the noncommutative polynomial system encodes the desired problem is quite involved, and uses a number of facts about the Lie algebra $\mathfrak{su}(2^{n})$ and its representations.

\subsection{Further directions}
The size of the noncommutative polynomial system we construct is polynomial in $n$ but it is still rather large, with roughly $O(n^{2})$ variables and $O(n^{8})$ constraints and nontrivial constant factors. Thus the main question is whether there is a simpler polynomial system for which one still has asymptotic convergence of the hierarchy to the true spectral gap. It seems unclear if it is possible to use the same approach of defining generators and relations for $\mathfrak{su}(2^{n})$ as a Lie algebra, since the proof strategy in this paper seems to rely on all of the relations included in the system. But it is still possible that there is some other set of constraints which may suffice to give complete certificates of lower bounds on the spectral gap.

Our hierarchy also suggests an approach to compute bounds on the first $k$ eigenvalues of $H$ for constant values of $k$. This would require finding constraints which characterize the higher exterior powers $\wedge^{k}(\mathbb{C}^{2^{n}})$ as representations of $\mathfrak{su}(2^{n})$ beyond $k = 2$, although even for $k = 2$ the constraints we give here are already rather complicated.

\section{Technical overview}
Here we summarize the construction of the two SDP hierarchies we use to certify spectral gaps and give an idea of the techniques used in the proof of convergence.

\subsection{Reduction to two SDP hierarchies}
Our approach combines an SDP hierarchy of \emph{lower bounds} on the sum of the smallest two eigenvalues of a local Hamiltonian $H$, obtained from noncommutative sum-of-squares (SOS) proofs in an algebra defined by suitable generators and relations, and an SDP family of \emph{upper bounds} on ground energy coming from a noncommutative analog of the Lasserre hierarchy of upper bounds which is connected to approximate ground-space projectors (AGSPs). The two bounds are then combined as
\[ \lambda_2(H) - \lambda_1(H) \ge A - 2B \]
where $A \leq \lambda_{1}(H) + \lambda_{2}(H)$ is a certified lower bound and $B \geq \lambda_{1}(H)$ is a certified upper bound.

\subsection{Hierarchy for lower bounds}
The lower-bound SDP hierarchy is built from a noncommutative $*$-algebra on a finite set of generators (with variables denoted $S_i^a$ and $T_{ij}^{ab}$), together with the explicit relations recorded in \eqref{eq:rels} and also the spectral and representation-theoretic constraints in \eqref{eq:sys}. Intuitively, $S_{i}^{a}$ is meant to correspond to a 1-local Pauli operator, and $T_{ij}^{ab}$ corresponds to a 2-local Pauli operator.

Concretely, we first define the algebra $\mathbb{C}\langle V_n\rangle / \mathcal I$ where $\mathcal I$ is the two-sided $*$-ideal generated by the relations \eqref{eq:rels}. These relations enforce certain commutation relations which characterize the universal enveloping algebra of the Lie algebra $\mathfrak{su}(2^{n})$. Its representations then correspond to irreducible Lie algebra representations of $\mathfrak{su}(2^{n})$, of which one is the antisymmetric subspace $\wedge^{2}(\mathbb{C}^{2^{n}})$. The action of $H$ on the antisymmetric subspace has $\lambda_{1}(H) + \lambda_{2}(H)$ as its lowest eigenvalue, though there are other representations. The definitions of the generators and relations and proof that these characterize $\mathfrak{su}(2^{n})$ are in Section~\ref{sec:genrel}.

We then impose additional noncommutative polynomial constraints and inequalities to rule out representations other than the antisymmetric subspace. For example, in \eqref{eq:sys} there are constraints on the eigenvalues of $S_{i}^{a}$, which then force the irreducible representation of $\mathfrak{su}(2^{n})$ to correspond to one of five particular Young diagrams. Additional constraints then rule out each of the four remaining irreducible representations other than $\wedge^{2}(\mathbb{C}^{2^{n}})$. The exact definitions of these additional constraints, as well as the proof that they rule out all other irreducible representations, appear in Section~\ref{sec:hierarchy}.

One slightly subtle point is that the additional two-site variables $T_{ij}^{ab}$ are needed to encode higher-locality interactions. Specifically, any $k$-local term in a Hamiltonian can be expressed as a Lie polynomial (which can be expanded to a noncommutative polynomial) in the chosen generators. The action of $H$ on a representation of the algebra we define is then that of the Lie algebra representation, which is the intended operator for $\wedge^{2}(\mathbb{C}^{2^{n}})$. However, writing the Hamiltonian as a noncommutative polynomial in the $S_{i}^{a}$ would not give a correctly converging hierarchy of lower bounds on $\lambda_{1}(H) + \lambda_{2}(H)$.

\subsection{Hierarchy of upper bounds}
One hierarchy of upper bounds uses the noncommutative version of the Lasserre hierarchy introduced in \cite{kmmv2025}. Concretely, one optimizes over positive noncommutative polynomials $P$ of bounded degree and forms the reweighted state $\psi_P(\cdot)=\frac{\varphi(P\,\cdot)}{\varphi(P)}$
  where $\varphi$ is a fixed reference state. Minimizing the expectation $\psi_P(H)$ over such $P$ yields a sequence of upper bounds $B_d$ on the ground energy which can be computed using an SDP. The resulting sequence of upper bounds decreases with $d$ and converges to the true ground energy. We give slightly more background in this hierarchy in Section~\ref{sec:upperbounds-lasserre}.

Another is based on the KMS conditions for thermal states at inverse temperature $\beta$ \cite{ffs2024}, in the limit $\beta \to \infty$. The conditions contain a family of scalar inequalities indexed by observables of the system, and after truncating to a finite-dimensional space of observables one can re-encode the family of scalar constraints by an SDP constraint \cite{ffs2024}, assuming that the $\beta \to \infty$ limit is taken. (For general $\beta$ one can instead use a constraint which is convex but non-SDP, as done in \cite{ffs2024}.)

\subsection{Example of a certificate}
To illustrate the hierarchy we include an explicit example in Section~\ref{sec:example}. For the trivial 1-local Hamiltonian $H_n = -\sum_i Z_i$ we show that a degree-4 SOS constructed using the system presented here certifies the optimal lower bound on the sum of the first two energy levels.

\section{Preliminaries}
\subsection{Noncommutative polynomial optimization problems}
A \textbf{noncommutative polynomial optimization problem} is defined by a finite set of noncommuting variables \[ X_1, \dots, X_n \]
along with a set of relations $q_{1}, \dots, q_{m}$, which are polynomials in the $X_{i}$ and $X_{i}^{*}$, together with a Hermitian polynomial \(p(X_1,\dots,X_n)\) satisfying $p^{*} = p$.
The goal is to maximize (or minimize) the expectation value of \(p\) over all Hilbert space representations of the variables that satisfy the relations, with respect to some quantum state \(\rho\). Concretely, the value of the problem is 
\[ \inf_{\rho, \{X_i\}} \mathrm{Tr}( \rho p(X_1,\dots,X_n) ) \]
where $\mathcal{H}$ is a fixed, infinite-dimensional separable Hilbert space, \(\rho\) is a density operator on \(\mathcal H\), and the supremum ranged over bounded operators \(X_i : \mathcal{H} \to \mathcal{H}\) such that $q_{j}(X_{1}, \dots, X_{n}) = 0$ for all $j \in [m]$. 

\subsubsection{The NPA hierarchy}
The NPA hierarchy \cite{npa2010a, npa2010b} gives a hierarchy of semidefinite programs which approximate the value of a noncommutative polynomial optimization problem. Specifically, one considers
\begin{align*}
\min \,\,\,&E[p] \\
\text{s.t.} \,\,\,&\begin{aligned}[t]
E &: \mathbb{R}\langle X_{1}, \dots, X_{n} \rangle_{\leq k} \to \mathbb{C} \\
0 &\leq E[q^{*}q] \qquad 2\deg q \leq k \\
0 &= E[q_{i}q] \qquad i \in [m], \deg q_{i} + \deg q \leq k
\end{aligned}
\end{align*}
where $k$ is an integer parameter and $E$ is defined on all noncommutative $*$-polynomials of degree at most $k$. In \cite{npa2010a}, it is proven that as $k \to \infty$, the value of these SDP converges monotonically to the true value of the noncommutative polynomial optimization problem.

\subsection{Algebras and Lie algebras defined by generators and relations}
The proof of asymptotic convergence will work with a Lie algebra defined by generators and relations, and show that it is isomorphic to $\mathfrak{su}(2^{n})$. Recall that a $*$-algebra $A$ is \textbf{free} on $n$ generators $x_{1}, \dots, x_{n}$ if it satisfies the following initiality (or universality) property: for any $n$ elements $b_{1}, \dots, b_{n}$ of a $*$-algebra $B$, there is a unique $*$-homomorphism $A \to B$ sending $x_{i}$ to $b_{i}$. Similarly, $A$ is defined by $n$ generators satisfying a set of relations if the same property holds for any elements $b_{1}, \dots, b_{n}$ which satisfy the same relations. To construct a $*$-algebra which is free, one can take the ring of noncommutative $*$-polynomials in a set of $n$ variables. To impose relations, one can take the quotient of the free $*$-algebra by a two-sided $*$-ideal generated by the given relations.

Similar versions of these constructions can be given for Lie algebras, which we will use later. The definition of quotient Lie algebras is relatively standard, but the construction of free Lie algebras is more involved. One can use \textit{Lyndon words} to construct an explicit basis, similar to the set of all words used for associative $*$-algebras \cite{r1993}. We will only use the existence of these constructions and the corresponding initiality/universality properties, and so do not give the full explicit construction here.

\subsection{Facts from representation theory}
The proof of convergence of our hierarchy uses a number of facts from representation theory, whose full proofs would be too lengthy to repeat here. Instead, we briefly state a number of the most important facts and refer the reader to the references \cite{b2013, dk2000, z1973} for more detail.

\begin{definition}
A (finite-dimensional) \textbf{representation} of a Lie group $G$ is a homomorphism 
\[
\pi: G \to GL(V)
\]
from $G$ into the group of invertible linear transformations on a complex vector space $V$. 
The dimension of $V$ is called the dimension of the representation. 
A representation is \emph{irreducible} (an \emph{irrep}) if $V$ has no nontrivial proper subspace which is invariant under all $\pi(g)$. 

A representation of a Lie algebra $\mathfrak{g}$ is a Lie algebra homomorphism 
\[
\rho: \mathfrak{g} \to \mathfrak{gl}(V)
\]
into the Lie algebra of endomorphisms of a vector space $V$, satisfying $\rho([X,Y]) = [\rho(X),\rho(Y)]$. 
Irreducibility is defined analogously. 
\end{definition}

\paragraph{Correspondence between representations of Lie groups and Lie algebras.}
Every finite-dimensional irreducible representation of a compact connected Lie group $G$, such as $SU(m)$, can be restricted to a Lie algebra $\mathfrak{g}$ by differentiating (in a suitable sense) the homomorphism $G \to GL(V)$, and remains irreducible. Conversely, every finite-dimensional irreducible representation of $\mathfrak{g}$ can be integrated uniquely to an irreducible representation of $G$. Thus we may freely identify irreps of $SU(m)$ with irreps of $\mathfrak{su}(m)$. Concretely, the Lie algebra $\mathfrak{su}(m)$ is the space of all skew-Hermitian matrices in $\mathbb{C}^{m \times m}$, and the Lie bracket is given by taking commutators.

\begin{definition}
If $(\pi,V)$ and $(\pi',W)$ are representations of a Lie group $G$, their tensor product is the representation 
\begin{align*}
\pi \otimes \pi' &:  G \to GL(V \otimes W) \\
(\pi \otimes \pi')(g) &= \pi(g)\otimes \pi'(g).
\end{align*}
Similarly, if $(\rho,V)$ and $(\rho',W)$ are representations of a Lie algebra $\mathfrak{g}$, their tensor product is
\begin{align*}
\rho \otimes \rho' &: \mathfrak{g} \to \mathfrak{gl}(V\otimes W) \\
(\rho \otimes \rho')(X) &= \rho(X)\otimes I_W + I_V\otimes \rho'(X).
\end{align*}
The tensor product is associative up to a canonical isomorphism.
\end{definition}

\begin{example}
Let $V=\mathbb{C}^m$ be the defining (fundamental) representation of $SU(m)$. 
For each $k=1,\dots,m$, the $k$-th exterior power
\[
\wedge^k V \coloneqq \{v \in V^{\otimes k} : \sigma \cdot v = \operatorname{sgn}(\sigma)v \}
\]
is a subspace of the $k$-fold tensor product $V^{\otimes k}$ consisting of antisymmetric tensors, where a permutation $\sigma \in S_{k}$ acts on $V^{\otimes k}$ by permuting indices.
The action of $SU(m)$ on $V^{\otimes k}$ is given by
\[
g\cdot (v_1 \otimes \cdots \otimes v_k) = (gv_1)\otimes \cdots \otimes (gv_k),
\]
For the Lie algebra $\mathfrak{su}(m)$ one has
\[
X\cdot (v_1 \otimes \cdots \otimes v_k) 
= \sum_{i=1}^k v_1 \otimes \cdots \otimes (Xv_i) \otimes \cdots \otimes v_k.
\]
It can be checked that the action is again well-defined on the space of antisymmetric tensors.
Each $\wedge^k V$ is an irreducible representation of $SU(m)$ with highest weight equal to the $k$-th fundamental weight. 
Its dimension is
\[
\dim(\wedge^k V) = \binom{m}{k}.
\]
Some standard special cases include
\begin{itemize}
  \item the defining representation $\wedge^1 V \cong V$
  \item the two-index antisymmetric irrep $\wedge^2 V$
  \item the determinant representation $\wedge^m V \cong \mathbb{C}$, which is isomorphic to the trivial irrep for $SU(m)$.
\end{itemize}
\end{example}

\begin{remark}
If $H$ is a quantum Hamiltonian, where we think of $iH \in \mathfrak{su}(m)$, then the action of $H$ on the two-index antisymmetric irrep gives another Hamiltonian acting on a different space whose smallest eigenvalue is $\lambda_{1}(H) + \lambda_{2}(H)$, which is the sum of the two smallest eigenvalues of $H$.
\end{remark}

\begin{definition}
We denote by $\epsilon_{abc}$ the Levi--Civita symbol in three indices:
\[
\epsilon_{abc} =
\begin{cases}
+1 & \text{if $(a,b,c)$ is an even permutation of $(1,2,3)$}, \\
-1 & \text{if $(a,b,c)$ is an odd permutation of $(1,2,3)$}, \\
0 & \text{if any two indices coincide}.
\end{cases}
\]
The Lie bracket of $\mathfrak{su}(2)$ can be written as $[\sigma^a,\sigma^b] = 2i\,\epsilon_{abc}\sigma^c$.
\end{definition}

\paragraph{Schur--Weyl duality for \texorpdfstring{$SU(m)$}{SU(m)}.}
Let $V = \mathbb{C}^m$ be the defining representation of $SU(m)$. The tensor power $V^{\otimes n}$ carries commuting actions of $SU(m)$ and the symmetric group $S_n$, which have both been defined above. Schur--Weyl duality states that
\[
V^{\otimes n} \cong \bigoplus_{\substack{\lambda \vdash n \\\ell(\lambda)\leq m}} V_\lambda \otimes W_\lambda,
\]
where the sum is over partitions $\lambda$ of $n$ with at most $m$ parts, $W_\lambda$ is the irreducible $SU(m)$ representation with highest weight $\lambda$, and $V_\lambda$ is the corresponding irreducible representation of $S_n$, both sets of which exhaust all irreducible representations of the two groups. We will sometimes denote the irreducible representation $W_{\lambda}$ by $S_{\lambda}(V)$.

\paragraph{Young tableau bases for irreducible representations of \texorpdfstring{\(SU(m)\)}{SU(m)}.}

Let \(\lambda=(\lambda_1,\lambda_2,\dots,\lambda_\ell)\) be a Young diagram of $m$ boxes. 
A \emph{semistandard Young tableau} (SSYT) of shape \(\lambda\) with entries in \(\{1,\dots,m\}\) is a filling of the boxes of \(\lambda\) with numbers from \(\{1,\dots,m\}\) such that the entries are weakly increasing along each row and strictly increasing down each column. The irreducible representation $W_{\lambda}$ has a distinguished basis indexed by semistandard Young tableaux of shape \(\lambda\). For a tableau \(T\), let
\[ c_i(T)\coloneqq \#\{\text{boxes of $T$ filled with $i$}\} \]
so that \(\sum_{i=1}^m c_i(T)=|\lambda|\). For \(g=\mathrm{diag}(t_1,\dots,t_m)\in GL(m,\mathbb C)\) we have
\[ g\cdot v_T = \left(\prod_{i=1}^m t_i^{c_i(T)}\right) v_T . \]
Similarly, for the Lie algebra action we have that for \(H=\mathrm{diag}(h_1,\dots,h_m)\in\mathfrak{gl}(m,\mathbb C)\),
\[ H\cdot v_T = \left(\sum_{i=1}^m c_i(T)h_i\right) v_T . \]
The weight of \(v_T\) is \(\mathrm{wt}(v_T)=\sum_{i=1}^m c_i(T)e_{i}\), where $e_{i}$ is the $i$-th standard basis vector. The highest weight of \(V^\lambda\) corresponds to the partition \(\lambda\) itself.

The Young tableau basis is constructed from a fixed ordered basis \((e_1,\dots,e_m)\) of the defining representation \(\mathbb C^m\). A box filled with \(i\) in a tableau corresponds \(e_i\). In particular, different choices of basis of $\mathbb{C}^{m}$ give different Young tableau bases of each \(V^\lambda\).

\paragraph{Casimir operators for \texorpdfstring{$SU(m)$}{SU(m)}.}
Let $\mathfrak{su}(m)$ have Killing form $\kappa(X,Y) = 2m\operatorname{Tr}(XY)$ on traceless matrices. 
Choose a basis $\{X_i\}$ of $\mathfrak{su}(m)$ and a dual basis $\{X^i\}$ with respect to $\kappa$. 
The quadratic Casimir element is
\[
C = \sum_i X_i X^i \in U(\mathfrak{su}(m)),
\]
where $U(\mathfrak{su}(m))$ denotes the universal enveloping algebra. 
The Casimir element is in the center of $U(\mathfrak{su}(m))$ and hence acts as a scalar on every irreducible representation. 
If $V_\lambda$ is the irrep of $SU(m)$ (or $\mathfrak{su}(m)$) with highest weight $\lambda = (\lambda_1,\dots,\lambda_{m-1})$, then $C$ acts on $V_\lambda$ by
\[
C \cdot v = \left( \langle \lambda, \lambda + 2\rho \rangle \right) v
\]
where $\rho$ is the Weyl vector, which is half the sum of positive roots, and $\langle \cdot,\cdot\rangle$ is the standard inner product on weights. 

\section{Upper bounds for ground energies of qubit Hamiltonians}\label{sec:upperbounds}
The main part of our hierarchy will give lower bounds $A \leq \lambda_{1}(H) + \lambda_{2}(H)$ for a local Hamiltonian $H$. To deduce bounds on the spectral gap, one then needs an \textbf{upper} bound $B \geq \lambda_{1}(H)$, from which we get
\[ \lambda_{2}(H) - \lambda_{1}(H) = \lambda_{2}(H) + \lambda_{1}(H) - 2\lambda_{1}(H) \geq A - 2B. \]
The simplest way to obtain such an upper bound $B$ would be to have some ansatz of states which are easy to optimize over with respect to the expectation value of $H$. Keeping with the spirit of this paper, we will describe an approach based on SDP hierarchies.

\subsection{The Lasserre hierarchy of upper bounds}\label{sec:upperbounds-lasserre}
The earliest such hierarchy is due to Lasserre \cite{l2011}, in the classical/commutative setting, and starts by fixing a reference measure $\mu$ on the feasible set with known moments. The ansatz of states considered is the set of probability distributions $p\,d\mu$, where $p$ ranges over low-degree sum-of-squares polynomials such that $\int p\,d\mu = 1$. Concretely, we have the sequence of upper bounds which can be defined as
\[f^{(d)} = \inf_{\substack{p\in \Sigma[x]\\\deg p\leq 2d}}  \frac{\int f(x)\,p(x)\,d\mu(x)}{\int p(x)\,d\mu(x)} \]
and computed using an SDP.
Knowledge of the moments of $\mu$ is required to be able to efficiently check the normalization for a given polynomial, as well as to be able to compute $\int fp\,d\mu$ where $f$ is the objective function.

\subsubsection{Noncommutative version of the Lasserre hierarchy of upper bounds}
In \cite{kmmv2025}, a noncommutative version of the Lasserre hierarchy of upper bounds was introduced.
Here one considers a unital \(*\)-algebra \(\mathcal A\) generated by noncommuting elements \(X_1,\dots,X_m\) with known relations, together with a fixed reference state \(\varphi\) on \(\mathcal A\). For degree bound \(2d\), one optimizes over noncommutative sums of squares \(P\in \Sigma\langle X \rangle\) of degree at most \(2d\), forming the reweighted state
\[ \psi_{P}(a) \coloneqq \frac{\varphi(P a)}{\varphi(P)}. \]
Given a Hermitian polynomial \(f\in\mathcal A\), the level-\(d\) bound is
\[ f^{(d)} := \inf_{\substack{P\in\Sigma\langle X\rangle\\ \deg P\leq 2d}} \frac{\varphi(P f)}{\varphi(P)}. \]
As in the commutative case, knowledge of the noncommutative $*$-moments of $\varphi$ is required to be able to efficiently check the normalization condition and compute the expectation of $f$, and the constraint \(P\in\Sigma\langle X\rangle\) is semidefinite representable.  
The sequence of bounds decreases with \(d\) and converges to the true minimum of \(f\) over all representations compatible with \(\varphi\) in sufficiently nice cases.

\subsubsection{Upper bounds for ground energies of qubit Hamiltonians}
For quantum local Hamiltonians, one can consider a straightforward noncommutative analog of the Lasserre hierarchy following the previous general outline. The algebra is the one $\mathcal{B}(\mathbb{C}^{2^{n}})$ of operators on $\mathbb{C}^{2^{n}}$, which is generated by Pauli matrices $X_{i}, Y_{i}, Z_{i}$ subject to standard commutation relation. The natural choice of a state is the maximally mixed state, for which the expectation of $A$ is given by $2^{-n}\operatorname{Tr}(A)$. The moments in terms of the Pauli matrices are all 0, except for the expectation of the identity operator (which is 1).

Approximate ground space projectors (AGSPs) are operators which are meant to approximate the true projector onto the ground space projector but have controlled complexity, and are often used in proofs of area laws for geometrically local Hamiltonians \cite{aklv2013, lvv2015, aag2022}.
In the SDP hierarchy for qudit systems, the reweighting polynomial \(P\) plays the role of such an operator. Specifically, the normalized state \(\psi(a)=\varphi(P a)/\varphi(P)\) concentrates weight near the ground space. As an operator on $\mathbb{C}^{2^{n}}$ it can be seen as an approximate ground space projector, and the degree of the polynomial can be seen as measuring its complexity.

\subsection{Hierarchy based on entropy-energy balance conditions}
As introduced by \cite{ffs2024} in the context of optimization hierarchies, one has the following characterization of a thermal state $\phi$ of a Hamiltonian $H$ at inverse temperature $\beta$:
\begin{align*}
\phi([H, a]) &= 0 \\
\phi(a^{*}[H, a]) &\geq \frac{1}{\beta}\phi(a^{*}a) \log \left( \frac{\phi(a^{*}a)}{\phi(aa^{*})} \right)
\end{align*}
where $a$ in both constraints ranges over all observables (here we only consider Hamiltonians on a finite number of qubits). In the limit as $\beta \to \infty$, which corresponds to the ground state of $H$, the second constraint becomes $\phi(a^{*}[H, a]) \geq 0$, which is equivalent to the matrix $[\phi(a_{i}^{*}[H, a_{j}])]_{i,j}$ being positive semidefinite for a linear basis $\{a_{i}\}$ of the space of observables.

When $H$ is local, one can then obtain an SDP hierarchy of outer approximations to the singleton set containing the ground state by only considering $a$ of bounded locality and replacing $\phi$ by a pseudo-state $\tilde{\phi}$ defined only on operators of bounded locality. By maximizing $\tilde{\phi}(H)$ subject to these constraints one can then obtain a hierarchy of upper bounds on the ground state energy of $H$. In some cases this can give more useful or tighter bounds compared to the noncommutative analog of the Lasserre hierarchy of upper bounds, as it will for the example we given in Section~\ref{sec:example}.

\section{Generators and relations for \texorpdfstring{$\mathfrak{su}(2^{n})$}{su(2\^n)}}\label{sec:genrel}
In this section, we give a set of generators of $\mathfrak{su}(2^{n})$ as a Lie algebra and determine a complete set of relations which these generators satisfy. The proof starts by first defining the generators, of which there are $O(n^{2})$. Then for each $k$-local Pauli operator $O \in \mathfrak{su}(2^{n})$ we define a canonical Lie polynomial in generators which is sent to $O$. We then define a set of relations in the generators such that any Lie polynomial in the generators is equivalent to a linear combination of the expressions corresponding to $k$-local Pauli operators defined previously. The proof amounts to a series of induction arguments which iteratively rewrite a given Lie polynomial into a normal form.

\begin{warning}
In this section (and only in this section), we adopt the convention that the Pauli matrices are given by
\begin{align*}
\sigma^{1} &= \frac{i}{2}X & \sigma^{2} &= \frac{i}{2}Y & \sigma^{3} &= \frac{i}{2}Z
\end{align*}
so that the commutation relations $[\sigma^{a}, \sigma^{b}]$ satisfy commutation relations only involving $\pm 1$, without additional factors of $i$ or 2. This simplifies the notation in our proof, which is already quite dense. Later sections defining the full hierarchy will use the usual convention for the Pauli matrices, while referring back to the Lie algebra relations defined here (which must be taken with the appropriate constant factors included).
\end{warning}

\subsection{Definition and set-up}
Let $n \geq 1$ be an integer.

\begin{definition}
Denote by $\mathfrak{f}_{n}$ the free Lie algebra generated by the symbols $S_{i}^{a}$ and $T_{ij}^{ab}$ for $1 \leq i < j \leq n$ and $a, b \in \{1, 2, 3\}$. We define a Lie algebra homomorphism $f_{n} : \mathfrak{f}_{n} \to \mathfrak{su}(2^{n})$ by
\begin{align*}
f_{n}(S_{i}^{a}) &= \sigma_{i}^{a} & f_{n}(T_{ij}^{ab}) &= \sigma_{i}^{a}\sigma_{j}^{b}.
\end{align*}
\end{definition}

Our goal in this section is to determine the kernel of the homomorphism $f_{n}$, or in other words to determine all Lie algebra relations satisfied by the images of the generators of $\mathfrak{f}_{n}$ in $\mathfrak{su}(2^{n})$. To this end, we will define a set of distinguished elements of $\mathfrak{f}_{n}$, along with a linear map $g_{n} : \mathfrak{su}(2^{n}) \to \mathfrak{f}_{n}$ which is a one-sided inverse of $f_{n}$, that is $f_{n} \circ g_{n} = \operatorname{id}_{\mathfrak{su}(2^{n})}$. After defining $g_{n}$, we will show that there is a set of relations of polynomial size such that the quotient of $\mathfrak{f}_{n}$ by these relations is isomorphic to $\mathfrak{su}(2^{n})$, with the isomorphisms being given by the induced maps of $f_{n}$ and $g_{n}$.

\begin{definition}
We define $A_{n}$ to be the set containing each $S_{i}^{a}$ for $1 \leq i \leq n$, and all expressions of the form \[ [[ \cdots [[S_{i_{1}}^{b_{1}}, T_{i_{1}i_{2}}^{a_{1}b_{2}}], T_{i_{2}i_{3}}^{a_{2}b_{3}}], \dots], T_{i_{m - 1}, i_{m}}^{a_{m - 1}b_{m}}] \]
where $i_{1} < i_{2} < \cdots < i_{m}$ and $b_{i}$ is always before $a_{i}$ in the cyclic order on $\{1, 2, 3\}$, or equivalently $(b_{i}, a_{i}) \in \{(1, 2), (2, 3), (3, 1)\}$. We define a linear map $g_{n} : \mathfrak{su}(2^{n}) \to \mathfrak{f}_{n}$ by first defining
\begin{align*}
g_{n}(\sigma_{i}^{a}) &= S_{i}^{a} \\
g_{n}(\sigma_{i_{1}}^{c_{1}} \cdots \sigma_{i_{m}}^{c_{m}}) &= [[ \cdots [[S_{i_{1}}^{b_{1}}, T_{i_{1}i_{2}}^{a_{1}b_{2}}], T_{i_{2}i_{3}}^{a_{2}b_{3}}], \dots], T_{i_{m - 1}, i_{m}}^{a_{m - 1}b_{m}}]
\end{align*}
for all $i_{1} < i_{2} < \cdots < i_{m}$, where $b_{m} = c_{m}$, and $(a_{j}, b_{j})$ is the unique element of $\{(1, 2), (2, 3), (3, 1)\}$ not containing $c_{j}$. The definition of $g_{n}$ is then extended linearly. We also define $g_{n}'$ similarly by $g_{n}'(\sigma_{i}^{a}) = S_{i}^{a}$ and
\[ g_{n}'(\sigma_{i_{1}}^{c_{1}} \cdots \sigma_{i_{m}}^{c_{m}}) = [[ \cdots [[T_{i_{1}i_{2}}^{c_{1}b_{2}}, T_{i_{2}i_{3}}^{a_{2}b_{3}}], T_{i_{3}i_{4}}^{a_{3}b_{4}}]], \dots], T_{i_{m - 1}, i_{m}}^{a_{m - 1}b_{m}}] \]
\end{definition}

For completeness, we include the following proof that $f_{n}$ is surjective. The proof is likely standard and is similar to the proof that quantum circuits built from 1- and 2-local unitary gates can approximate any $n$-qubit unitary.

\begin{proposition}\label{prop:1side-inv}
For any element $x \in \mathfrak{su}(2^{n})$, we have $f_{n}(g_{n}(x)) = x$ and $f_{n}(g_{n}'(x)) = x$.
\end{proposition}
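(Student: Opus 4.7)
The plan is to reduce to a direct verification on a basis of $\mathfrak{su}(2^{n})$. Viewed in its usual complex form, $\mathfrak{su}(2^{n})$ is spanned by the Pauli-string products $\sigma_{i_{1}}^{c_{1}} \cdots \sigma_{i_{m}}^{c_{m}}$ with $m \geq 1$, $i_{1} < i_{2} < \cdots < i_{m}$, and $c_{j} \in \{1,2,3\}$, and these are linearly independent. Since $g_{n}$ and $g_{n}'$ are linear by definition, it suffices to verify $f_{n}(g_{n}(P)) = P$ and $f_{n}(g_{n}'(P)) = P$ for $P$ one such basis element. The case $m = 1$ is immediate from the defining formulas.

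For $m \geq 2$, I would set $L_{1} := S_{i_{1}}^{b_{1}}$ and inductively $L_{k} := [L_{k-1}, T_{i_{k-1} i_{k}}^{a_{k-1} b_{k}}]$, so that $g_{n}(P) = L_{m}$, and then prove by induction on $k$ the \emph{refined} identity
\[ f_{n}(L_{k}) = \sigma_{i_{1}}^{c_{1}} \sigma_{i_{2}}^{c_{2}} \cdots \sigma_{i_{k-1}}^{c_{k-1}} \sigma_{i_{k}}^{b_{k}}, \]
whose last factor carries the auxiliary label $b_{k}$ rather than $c_{k}$. This refinement is essential for the induction to close: every new $T$-generator introduces a fresh outer $b_{k}$ that only gets replaced by $c_{k}$ when the \emph{next} commutator fires. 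Specializing to $k = m$ and using $b_{m} = c_{m}$ yields $f_{n}(g_{n}(P)) = P$.

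The inductive step combines two easy ingredients. First, because operators at distinct sites commute, in the expansion of $[f_{n}(L_{k-1}), \sigma_{i_{k-1}}^{a_{k-1}} \sigma_{i_{k}}^{b_{k}}]$ the prefix $\sigma_{i_{1}}^{c_{1}} \cdots \sigma_{i_{k-2}}^{c_{k-2}}$ and the suffix $\sigma_{i_{k}}^{b_{k}}$ both factor outside the bracket, leaving
\[ \sigma_{i_{1}}^{c_{1}} \cdots \sigma_{i_{k-2}}^{c_{k-2}} \, [\sigma_{i_{k-1}}^{b_{k-1}}, \sigma_{i_{k-1}}^{a_{k-1}}] \, \sigma_{i_{k}}^{b_{k}}. \]
Second, in the Warning's convention $\sigma^{a} = (i/2) P^{a}$, a one-line computation gives $[\sigma^{a}, \sigma^{b}] = -\epsilon_{abc} \sigma^{c}$, so $[\sigma^{b}, \sigma^{a}] = \sigma^{c}$ precisely when $(a,b,c)$ is an even cyclic permutation of $(1,2,3)$, i.e.\ exactly when $(a,b) \in \{(1,2),(2,3),(3,1)\}$ with $c$ the missing index. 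Since the definition selects $(a_{k-1}, b_{k-1})$ to be this element for $c_{k-1}$, the inner bracket collapses to $\sigma_{i_{k-1}}^{c_{k-1}}$, closing the induction.

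The argument for $g_{n}'$ is essentially identical: the induction now starts at $L_{2} := T_{i_{1} i_{2}}^{c_{1} b_{2}}$, whose image $f_{n}(L_{2}) = \sigma_{i_{1}}^{c_{1}} \sigma_{i_{2}}^{b_{2}}$ is already in the form required by the inductive hypothesis at $k = 2$, and the inductive step above is unchanged. The only real difficulty in carrying out this plan is notational bookkeeping, tracking $b_{j}$ versus $c_{j}$ and the cyclic-order convention consistently through many iterations; there is no conceptual obstruction.
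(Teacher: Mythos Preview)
Your proof is correct and follows essentially the same approach as the paper: both argue by induction on the length of the Pauli string, factor the commutator through the unique overlapping site, and use the cyclic choice of $(a_{k-1},b_{k-1})$ to collapse $[\sigma^{b_{k-1}},\sigma^{a_{k-1}}]$ to $\sigma^{c_{k-1}}$. The only cosmetic difference is that the paper phrases the inductive hypothesis as ``$f_n\circ g_n$ is the identity on all Pauli strings of length $m-1$'' and then applies it to the modified string $\sigma_{i_1}^{c_1}\cdots\sigma_{i_{m-2}}^{c_{m-2}}\sigma_{i_{m-1}}^{b_{m-1}}$, whereas you track the explicit refined formula $f_n(L_k)=\sigma_{i_1}^{c_1}\cdots\sigma_{i_{k-1}}^{c_{k-1}}\sigma_{i_k}^{b_k}$; these are the same statement, since your $L_{m-1}$ is exactly $g_n$ of that modified string.
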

\begin{proof}
By linearity, it suffices to check the statement for Pauli strings. We prove the statement by induction on the cardinality $m$ of the support of a given Pauli string. The base case $m = 1$ is clear. For the induction step, let $\sigma_{i_{1}}^{c_{1}} \cdots \sigma_{i_{m}}^{c_{m}}$ be a given Pauli string. Let $(b_{m - 1}, a_{m - 1}) \in \{(1, 2), (2, 3), (3, 1)\}$ such that neither are equal to $c_{m - 1}$. Applying the induction hypothesis to $\sigma_{i_{1}}^{c_{1}} \cdots \sigma_{i_{m-2}}^{c_{m-2}}\sigma_{i_{m - 1}}^{b_{m-1}}$ gives
\[
f_{n}([[ \cdots [[S_{i_{1}}^{b_{1}}, T_{i_{1}i_{2}}^{a_{1}b_{2}}], T_{i_{2}i_{3}}^{a_{2}b_{3}}], \dots], T_{i_{m - 2}, i_{m - 1}}^{a_{m - 2}b_{m - 1}}]) = \sigma_{i_{1}}^{c_{1}} \cdots \sigma_{i_{m-2}}^{c_{m-2}}\sigma_{i_{m - 1}}^{b_{m - 1}}
\]
so
\begin{align*}
&\phantom{{}={}} f_{n}( [[ \cdots [[S_{i_{1}}^{b_{1}}, T_{i_{1}i_{2}}^{a_{1}b_{2}}], T_{i_{2}i_{3}}^{a_{2}b_{3}}], \dots], T_{i_{m - 1}, i_{m}}^{a_{m - 1}c_{m}}]) \\
&= [f_{n}( [[ \cdots [[S_{i_{1}}^{b_{1}}, T_{i_{1}i_{2}}^{a_{1}b_{2}}], T_{i_{2}i_{3}}^{a_{2}b_{3}}], \dots], T_{i_{m - 2}, i_{m - 1}}^{a_{m - 2}b_{m - 1}}]), f_{n}(T_{i_{m-1},i_{m}}^{a_{m-1}c_{m}})] \\
&= [\sigma_{i_{1}}^{c_{1}} \cdots \sigma_{i_{m-2}}^{c_{m-2}}\sigma_{i_{m - 1}}^{b_{m - 1}}, \sigma_{i_{m-1}}^{a_{m-1}}\sigma_{i_{m}}^{c_{m}}] \\
&= \sigma_{i_{1}}^{c_{1}} \cdots \sigma_{i_{m-2}}^{c_{m-2}}[\sigma_{i_{m - 1}}^{b_{m - 1}}, \sigma_{i_{m-1}}^{a_{m-1}}] \sigma_{i_{m}}^{c_{m}} \\
&= \sigma_{i_{1}}^{c_{1}} \cdots \sigma_{i_{m}}^{c_{m}}
\end{align*}
finishing the induction step. A similar induction argument shows that $f_{n}(g_{n}'(x)) = x$.
\end{proof}

We can now use $g_{n}$ and $g_{n}'$ to define the relations we will use.

\begin{definition}
Define the following system of noncommutative polynomial equations in the generators of $\mathfrak{f}_{n}$, which is also a system of Lie polynomial equations:
\begin{subequations} \label{eq:rels}
\begin{align}
[S_{i}^{a}, S_{j}^{b}] &= \delta_{ij} \sum _{c} \epsilon_{abc} S_{i}^{c} \label{eq:11} \\
[S_{i}^{a}, T_{ij}^{bc}] &= \delta_{ij} \sum _{d} \epsilon_{abd} T_{jk}^{dc} + \delta_{ik} \sum _{d} \epsilon_{acd} T_{jk}^{dc} \label{eq:12} \\
[T_{ij}^{ab}, T_{kl}^{cd}] &= \begin{cases}
-[T_{kj}^{cb}, T_{il}^{ad}] & i = k, j \neq l \\
-[T_{ik}^{ac}, T_{jl}^{bd}] & j = k, i \neq l \\
-[T_{il}^{ad}, T_{kj}^{cb}] & j = l, i \neq k \\
\end{cases} \label{eq:22swap} \\
[T_{ij}^{ab}, T_{kl}^{cd}] &= \begin{cases}
\sum_{e} \epsilon_{bde} S_{j}^{e} & i = k, j = l, a = c, b \neq d \\
\sum_{e} \epsilon_{ace} S_{i}^{e} & i = k, j = l, a \neq c, b = d \\
0 & \text{otherwise}
\end{cases} \label{eq:22both} \\
[[T_{ij}^{ab}, T_{kl}^{cd}], T_{rs}^{ef}] &= g_{n}'(f_{n}([[T_{ij}^{ab}, T_{kl}^{cd}], T_{rs}^{ef}])) \label{eq:222} \\
[[[T_{ij}^{ab}, T_{kl}^{cd}], T_{rs}^{ef}], T_{tu}^{gh}] &= g_{n}'(f_{n}([[[T_{ij}^{ab}, T_{kl}^{cd}], T_{rs}^{ef}], T_{tu}^{gh}])) \label{eq:2222} \\
[[S_{i}^{a}, T_{jk}^{bc}], T_{lr}^{de}] &= g_{n}(f_{n}([[S_{i}^{a}, T_{jk}^{bc}], T_{lr}^{de}])). \label{eq:221}
\end{align}
\end{subequations}
\end{definition}

\begin{remark}
The first four families of relations can also be written as $x = g_{n}(f_{n}(x))$ or $x = g_{n}'(f_{n}(x))$ for a given element $x \in \mathfrak{f}_{n}$, but for clarity we have chosen to write them out explicitly.
\end{remark}

\begin{definition}
We denote by $\mathfrak{g}_{n}$ the Lie algebra which is the quotient of $\mathfrak{f}_{n}$ by the relations in Equation~\ref{eq:rels}.
\end{definition}

While the relations defined in Equation~\ref{eq:rels} may be difficult to read, the proof that the homomorphism $f_{n}$ is well-defined will help motivate where each of the relations comes from.

\begin{proposition}\label{prop:surjective}
The homomorphism $f_{n} : \mathfrak{f}_{n} \to \mathfrak{su}(2^{n})$  induces a well-defined homomorphism $\mathfrak{g}_{n} \to \mathfrak{su}(2^{n})$.
\end{proposition}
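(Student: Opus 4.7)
The plan is to verify that the Lie algebra homomorphism $f_n : \mathfrak{f}_n \to \mathfrak{su}(2^n)$ vanishes on each of the generating elements of the ideal defining $\mathfrak{g}_n$, i.e., that $f_n(L) = f_n(R)$ for every relation $L = R$ in \eqref{eq:rels}. Since $f_n$ is already a Lie algebra homomorphism on the free Lie algebra $\mathfrak{f}_n$, the universal property of the quotient will then give a well-defined induced map $\mathfrak{g}_n \to \mathfrak{su}(2^n)$.

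I would split the verification into two groups. The first group consists of the ``explicit'' relations \eqref{eq:11}, \eqref{eq:12}, \eqref{eq:22swap}, and \eqref{eq:22both}. Each of these is a direct computation in $\mathfrak{su}(2^n)$ using only the single-site commutation rule $[\sigma^a, \sigma^b] = \sum_c \epsilon_{abc}\sigma^c$ and the fact that Pauli operators on distinct sites commute. For example, to check \eqref{eq:22swap} in the case $i = k$, $j \neq l$, I would use that $\sigma_j^b$ and $\sigma_l^d$ both commute with $\sigma_i^a$ and $\sigma_i^c$ to pull them out, obtaining $[\sigma_i^a \sigma_j^b, \sigma_i^c \sigma_l^d] = [\sigma_i^a, \sigma_i^c]\sigma_j^b\sigma_l^d$, and then observe that the right-hand side of the relation produces the same expression after the analogous manipulation. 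For \eqref{eq:22both} in the case $a = c$, $b \neq d$, I would use that $(\sigma_i^a)^2$ is a scalar to collapse the commutator onto the single site $j$.

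The second group consists of \eqref{eq:222}, \eqref{eq:2222}, and \eqref{eq:221}, for which the right-hand side is given by applying $g_n'$ or $g_n$ to $f_n$ of the left-hand side. These relations are tautologically sent to $0$ by $f_n$: applying $f_n$ to each side and invoking Proposition~\ref{prop:1side-inv} (which states $f_n \circ g_n = f_n \circ g_n' = \mathrm{id}_{\mathfrak{su}(2^n)}$) shows that both sides evaluate to $f_n(L)$. So the only genuine computational content lies in the first group.

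The main obstacle is simply organizing the case analysis in \eqref{eq:22swap} and \eqref{eq:12}, where the sub-cases distinguish which of the site indices coincide, but none of the individual cases is conceptually difficult, as each reduces to distributing a single-site commutator across commuting factors on other sites. Once all seven families of relations have been checked to hold in $\mathfrak{su}(2^n)$, the induced map $\mathfrak{g}_n \to \mathfrak{su}(2^n)$ is well-defined, completing the proof.
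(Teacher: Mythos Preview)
Your proposal is correct and follows essentially the same approach as the paper: verify the explicit relations \eqref{eq:11}--\eqref{eq:22both} by direct single-site commutator computations, and dispose of \eqref{eq:222}--\eqref{eq:221} tautologically via $f_n \circ g_n = f_n \circ g_n' = \mathrm{id}$ from Proposition~\ref{prop:1side-inv}. The paper's proof is organized identically, with the only difference being that it writes out one sample computation for \eqref{eq:22swap} and \eqref{eq:22both} in slightly more detail.
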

\begin{proof}
It suffices to check that $f_{n}$ sends both sides of each relations to the same element of $\mathfrak{su}(2^{n})$. The relation in Equation~\ref{eq:11} states that $[\sigma_{i}^{a}, \sigma_{j}^{b}]$ is 0 if $i \neq j$ and otherwise the appropriate Pauli operator on site $i$. The relation in Equation~\ref{eq:12} states that the commutator of a 1-local and a 2-local Pauli operator is 0 if they commute, and otherwise the other 2-local Pauli operator if the supports intersect each other.

The relation in Equation~\ref{eq:22swap} states that the commutator of two 2-local Pauli operators can be written as another commutator of two 2-local Pauli operators when the supports intersect on a single site; in this case we have
\[ [\sigma_{i}^{a}\sigma_{j}^{b}, \sigma_{j}^{c}\sigma_{k}^{d}] = \sigma_{i}^{a}[\sigma_{j}^{b}, \sigma_{j}^{c}]\sigma_{k}^{d} = -\sigma_{i}^{a}[\sigma_{j}^{c}, \sigma_{j}^{b}]\sigma_{k}^{d} = -[\sigma_{i}^{a}\sigma_{j}^{c}, \sigma_{j}^{b}\sigma_{k}^{d}] \]
and similarly for the other cases. When both supports are the same, as in Equation~\ref{eq:22both}, either the two operators commute or on one of the sites the operators are the same and on the other they are different; in the first case the commutator is zero, and in the second we have
\[ [\sigma_{i}^{a}\sigma_{j}^{b}, \sigma_{i}^{a}\sigma_{j}^{c}] = (\sigma_{i}^{a})^{2}\sigma_{j}^{b}\sigma_{j}^{c} - (\sigma_{i}^{a})^{2} \sigma_{j}^{c}\sigma_{j}^{b} = [\sigma_{j}^{b}, \sigma_{j}^{c}] \]
which is a 1-local Pauli operator.

For the final relation in Equation~\ref{eq:222}, we have $f_{n}(g_{n}'(f_{n}(x))) = f_{n}(x)$ by applying Proposition~\ref{prop:1side-inv}. (The same argument works for the other relations if they are written in a similar manner.)
\end{proof}

\subsection{Proof of completeness of the relations}

The following lemma is standard, but we prove it for completeness.

\begin{lemma}\label{lem:right}
If $x, y \in \mathfrak{f}_{n}$, then $[x, y]$ can be written as a linear combination of expressions of the form $[ \cdots [[x, z_{1}], z_{2}], \dots, z_{m}] $ where $z_{1}, \dots, z_{m}$ are one of the generators of $\mathfrak{f}_{n}$.
\end{lemma}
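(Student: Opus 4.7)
The plan is to induct on the degree of $y$ as a Lie polynomial in the generators of $\mathfrak{f}_n$, treating $x$ as an arbitrary element. The statement I prove is: for every $d \geq 1$, and every $x \in \mathfrak{f}_n$, if $y \in \mathfrak{f}_n$ has degree at most $d$, then $[x,y]$ lies in the span of right-nested brackets $[\cdots[[x, z_1], z_2], \dots, z_m]$ with each $z_i$ a generator.

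The base case $d = 1$ is immediate, since then $y$ is (by linearity) a generator itself and $[x,y]$ is already in the desired form with $m=1$. For the inductive step, by linearity I may assume $y = [u, v]$ for Lie polynomials $u, v$ of strictly smaller degree. The Jacobi identity rewrites
\[
[x, [u, v]] = [[x, u], v] - [[x, v], u].
\]
Applying the inductive hypothesis to $[x, u]$ (with the same $x$, but $u$ in the role of $y$) expresses it as a linear combination $\sum_i c_i w_i$ where each $w_i = [\cdots[[x, z_1^{(i)}], z_2^{(i)}], \dots, z_{k_i}^{(i)}]$ is right-nested starting with $x$ and continuing with generators. Then
\[
[[x, u], v] = \sum_i c_i\, [w_i, v].
\]
A second application of the inductive hypothesis, now with $w_i$ in the role of $x$ and $v$ (of degree smaller than $y$) in the role of $y$, rewrites each $[w_i, v]$ as a linear combination of right-nested brackets $[\cdots[[w_i, z'_1], z'_2], \dots, z'_{m'}]$ with $z'_j$ generators. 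Since $w_i$ is itself right-nested starting with $x$, the resulting expressions are right-nested starting with $x$ and continuing with generators. A symmetric argument handles $[[x, v], u]$, and summing gives the desired expression for $[x, [u,v]]$.

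The only subtlety is making sure the induction hypothesis is stated uniformly in $x$: since the second application needs to feed $w_i$ (not the original $x$) into the "leftmost" slot, the induction must be on the degree of the second argument $y$ while $x$ is quantified universally. That is exactly how the statement is phrased, so no difficulty arises. I do not expect any genuine obstacle here; this is a routine Jacobi-and-induct argument, and the only thing to track carefully is the bookkeeping that the $w_i$ produced at the first step really are right-nested in $x$, so that appending further generators on the right keeps them in the desired form.
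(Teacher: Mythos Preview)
Your proof is correct and follows essentially the same approach as the paper: induction on the complexity of $y$ (you use degree, the paper uses depth of the Lie-polynomial expression), the same Jacobi rewrite $[x,[u,v]]=[[x,u],v]-[[x,v],u]$, and the same two-stage application of the induction hypothesis. Your explicit remark that the hypothesis must be uniform in $x$ so that $w_i$ can replace $x$ in the second application is exactly the bookkeeping the paper relies on implicitly.
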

\begin{proof}
The element $y \in \mathfrak{f}_{n}$ can be written as a Lie polynomial in the generators, which is an expression constructed inductively as either one of the generators or a commutator of two such expressions. We use induction on the depth of the expression to prove the lemma. The base case is when $y$ itself is a generator, which holds vacuously. For the induction step, let $y = [y_{1}, y_{2}]$. Then the Jacobi identity gives
\begin{align*}
[x, y] &= [x, [y_{1}, y_{2}]] \\
&= [[x, y_{1}], y_{2}] - [[x, y_{2}], y_{1}].
\end{align*}
For the first term, the induction hypothesis allows us to write $[x, y_{1}]$ as a linear combination of terms of the form $x_{i} = [ \cdots [[x, z_{1}], z_{2}], \dots, z_{m'}]$. The induction hypothesis can then be applied to each $[x_{i}, y_{2}]$ and $y_{2}$, and taking an appropriate linear combination shows that $[[x, y_{1}], y_{2}]$ can be written as an expression of the desired form. A similar argument holds for $[[x, y_{2}], y_{1}]$.
\end{proof}

Our goal is to show that any element in $\mathfrak{g}_{n}$ is equal to a linear combination of elements of $A$ under the relations from Equation~\ref{eq:rels}. We will split the proof into several lemmas.

\begin{lemma}\label{lem:commute-down}
Let $x \in A$ be a given element written as \[ x = [[ \cdots [[S_{i_{1}}^{b_{1}}, T_{i_{1}i_{2}}^{a_{1}b_{2}}], T_{i_{2}i_{3}}^{a_{2}b_{3}}], \dots], T_{i_{m - 1}, i_{m}}^{a_{m - 1}b_{m}}]. \] Let $g$ be a given generator and $k$ be the largest index such that $i_{k}$ is equal to a subscript of $g$. If such a $k$ does not exist then in $\mathfrak{g}_{n}$ we have that $[x, g] = 0$ and otherwise
\[ [x, g] = [[ \cdots [[[[ \cdots [[S_{i_{1}}^{b_{1}}, T_{i_{1}i_{2}}^{a_{1}b_{2}}], T_{i_{2}i_{3}}^{a_{2}b_{3}}], \dots], T_{i_{k}, i_{k + 1}}^{a_{k}b_{k + 1}}], g], T_{i_{k + 1},i_{k + 2}}^{a_{k + 1},b_{k + 2}}], \dots], T_{i_{m-1},i_{m}}^{a_{m-1}b_{m}}]. \]
\end{lemma}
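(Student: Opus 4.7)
The natural strategy is induction on the length $m$ of the nested commutator. The base case $m = 1$ is direct from the relations \eqref{eq:11} and \eqref{eq:12}: if $g = S_j^c$ with $j \ne i_1$ then \eqref{eq:11} gives $[S_{i_1}^{b_1}, g] = 0$, and if $g = T_{jk}^{cd}$ with $i_1 \notin \{j,k\}$ then \eqref{eq:12} gives $[S_{i_1}^{b_1}, g] = 0$; in the remaining subcase $k = 1$ the claimed formula is tautologically $[x, g]$.

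For the inductive step with $m \ge 2$, I would write $x = [y, T_{i_{m-1}, i_m}^{a_{m-1} b_m}]$ where $y \in A_n$ is the nested commutator of length $m-1$ on the subscripts $i_1 < \cdots < i_{m-1}$, and apply the Jacobi identity:
\[
[x, g] = \bigl[[y, g],\, T_{i_{m-1}, i_m}^{a_{m-1} b_m}\bigr] + \bigl[y,\, [T_{i_{m-1}, i_m}^{a_{m-1} b_m}, g]\bigr].
\]
I would then split into cases based on the position of the largest matching index $k$. If $k = m$ or $k = m - 1$, the formula in the lemma places $g$ at the tail of the bracket chain and thus reduces to $[x, g]$ itself, so there is nothing to prove.

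The substantive case is $k \le m - 2$, together with the parallel case where no such $k$ exists. In both subcases, the subscripts of $g$ are disjoint from $\{i_{m-1}, i_m\}$, so the second Jacobi term vanishes: if $g$ is an $S$ generator this is \eqref{eq:12} with both Kronecker deltas zero, and if $g$ is a $T$ generator this is the final branch of \eqref{eq:22both}. Applying the induction hypothesis to $[y, g]$ then either yields $0$ (when $g$ shares no subscript with $y$'s chain) or expresses $[y, g]$ as the nested bracket with $g$ inserted directly after $T_{i_k, i_{k+1}}^{a_k b_{k+1}}$. Taking the outer commutator with $T_{i_{m-1}, i_m}^{a_{m-1} b_m}$ appends this generator at the very end, reproducing the claimed formula.

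The proof is essentially bookkeeping with Jacobi; the only substantive point is to check that whenever the conclusion is non-trivial the second Jacobi term vanishes by disjointness of supports, which is precisely what the relations \eqref{eq:11}, \eqref{eq:12}, and \eqref{eq:22both} are designed to encode. The cases $k \in \{m-1, m\}$ where the second Jacobi term might fail to vanish are exactly the cases where the claim is a tautology, so no obstruction arises.
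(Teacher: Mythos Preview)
Your proof is correct and follows essentially the same approach as the paper's: induction on $m$, the Jacobi identity, and the observation that generators with disjoint subscript sets commute in $\mathfrak{g}_n$ (from \eqref{eq:11}, \eqref{eq:12}, and the ``otherwise'' clause of \eqref{eq:22both}). The paper's version is terser---it simply notes that $[x,y]=0$ implies $[x,[y,z]]=[y,[x,z]]$ and says ``inductively applying this equality proves the lemma''---while you make the case split on $k$ explicit and correctly observe that $k\in\{m-1,m\}$ renders the claim tautological; but the underlying argument is identical.
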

\begin{proof}
We use induction. The Jacobi rule shows for any elements $x, y, z$ that
\[ [x, [y, z]] = [[x, y], z] + [y, [x, z]]. \]
In particular, if $[x, y] = 0$ then $[x, [y, z]] = [y, [x, z]]$. From the relations in Equation~\ref{eq:rels} we see that if $g$ and $h$ are generators where the indices in the subscripts are disjoint, then $[g, h] = 0$. Inductively applying this equality then proves the lemma.
\end{proof}

\begin{lemma}\label{lem:comm-up}
Suppose a given expression $x$ contains only indices strictly less than some integer $M$ and that $i < j \leq k < l$ with $k > M$. Then
\[ [[x, y], T_{kl}^{cd}] = [x, [y, T_{kl}^{cd}]]. \]
\end{lemma}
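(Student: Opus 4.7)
The plan is to reduce the claim to showing that $[x, T_{kl}^{cd}] = 0$ in $\mathfrak{g}_n$, and then to prove that by structural induction on $x$.

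For the first step, I would apply the Jacobi identity with $A = T_{kl}^{cd}$, $B = x$, $C = y$, which (after rearranging) gives
\[
[[x, y], T_{kl}^{cd}] = [x, [y, T_{kl}^{cd}]] - [y, [x, T_{kl}^{cd}]].
\]
So the lemma reduces to establishing that $[x, T_{kl}^{cd}] = 0$ in $\mathfrak{g}_n$. The hypothesis $i < j \leq k < l$ and $k > M$ ensures $\{k, l\}$ is disjoint from the set of indices appearing in $x$ (since those are all strictly less than $M < k < l$), so this is really the claim that expressions built from generators with indices disjoint from $\{k, l\}$ commute with $T_{kl}^{cd}$.

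To prove $[x, T_{kl}^{cd}] = 0$, I would induct on the depth of $x$ as a Lie polynomial in the generators of $\mathfrak{f}_n$. In the base case $x$ is a single generator: if $x = S_m^a$ with $m < M$, then the Kronecker deltas in \eqref{eq:12} both vanish, giving $[S_m^a, T_{kl}^{cd}] = 0$; if $x = T_{mn}^{ab}$ with $m, n < M$, then $\{m,n\} \cap \{k,l\} = \emptyset$, so we fall into the ``otherwise'' branch of \eqref{eq:22both} and again get $0$. For the inductive step, write $x = [x_1, x_2]$, where $x_1$ and $x_2$ are Lie polynomials whose indices are among those of $x$ and hence also strictly less than $M$. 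Applying the Jacobi identity once more,
\[
[[x_1, x_2], T_{kl}^{cd}] = [x_1, [x_2, T_{kl}^{cd}]] - [x_2, [x_1, T_{kl}^{cd}]],
\]
and the inductive hypothesis makes both inner brackets zero.

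The main obstacle is essentially bookkeeping: making sure that (a) every generator whose subscripts are disjoint from $\{k,l\}$ genuinely commutes with $T_{kl}^{cd}$ on the nose in $\mathfrak{g}_n$ (which requires interpreting the ``otherwise'' case of \eqref{eq:22both} as covering the disjoint-support case, not only equal-support commuting ones), and (b) the decomposition $x = [x_1, x_2]$ inherits the index bound $< M$. Both points are straightforward once the index conventions in \eqref{eq:rels} are parsed carefully; no additional use of the more delicate relations \eqref{eq:222}--\eqref{eq:221} is needed here.
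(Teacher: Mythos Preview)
Your proposal is correct and follows essentially the same route as the paper: apply the Jacobi identity to reduce to $[x, T_{kl}^{cd}] = 0$, and then obtain that vanishing from the fact that generators with disjoint index sets commute. The only cosmetic difference is that the paper invokes Lemma~\ref{lem:commute-down} for the vanishing, whereas you spell out the structural induction on $x$ directly; the underlying reasoning (and your caveat about reading the ``otherwise'' clause of \eqref{eq:22both} as covering disjoint supports) matches what the paper uses in the proof of Lemma~\ref{lem:commute-down}.
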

\begin{proof}
We have
\begin{align*}
[[x, y], T_{kl}^{cd}] &= [[x, T_{kl}^{cd}], y] + [x, [y, T_{kl}^{cd}]] \\
&= [x, [y, T_{kl}^{cd}]]
\end{align*}
since the first term is zero by Lemma~\ref{lem:commute-down} and the assumption that $k > M$.
\end{proof}

\begin{lemma}\label{lem:outer-cyclic}
Suppose a given expression $x$ contains only indices strictly less than some integer $M$ and that $i < j \leq k < l$ with $k > M$. Then
\[ [[x, T_{ij}^{ab}], T_{jk}^{cd}] = -[[x, T_{ij}^{ac}], T_{jk}^{bd}] \]
\end{lemma}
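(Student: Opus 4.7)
The plan is to reduce the claim to a single application of the swap case of relation \eqref{eq:22swap} by using Lemma~\ref{lem:comm-up} to push the outer $T$ across $x$ on both sides. Under the hypothesis that $k>M$ (and therefore also $k>j$ with $T_{jk}^{cd}$ avoiding the support of $x$), the outer operator commutes with $x$, so the Jacobi identity collapses the nested commutator into one in which the swap relation applies cleanly.

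First, I would invoke Lemma~\ref{lem:comm-up} with $y = T_{ij}^{ab}$ and outer operator $T_{jk}^{cd}$ to obtain
\[ [[x, T_{ij}^{ab}], T_{jk}^{cd}] = [x, [T_{ij}^{ab}, T_{jk}^{cd}]]. \]
Since $i<j<k$ forces $i\neq k$, the second case of \eqref{eq:22swap} applies and gives
\[ [T_{ij}^{ab}, T_{jk}^{cd}] = -[T_{ij}^{ac}, T_{jk}^{bd}]. \]
Substituting and then applying Lemma~\ref{lem:comm-up} in the reverse direction (the hypothesis holds equally for $T_{jk}^{bd}$ since its indices are the same as those of $T_{jk}^{cd}$) gives
\[ [x, [T_{ij}^{ab}, T_{jk}^{cd}]] = -[x, [T_{ij}^{ac}, T_{jk}^{bd}]] = -[[x, T_{ij}^{ac}], T_{jk}^{bd}], \]
which chains to the desired identity.

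An equivalent, more hands-on derivation would expand both $[[x,T_{ij}^{ab}],T_{jk}^{cd}]$ and $[[x,T_{ij}^{ac}],T_{jk}^{bd}]$ directly via the Jacobi rule $[[X,Y],Z]=[X,[Y,Z]]-[Y,[X,Z]]$. The outer ``$[x,T_{jk}^{\cdot d}]$'' terms in each expansion vanish by Lemma~\ref{lem:commute-down} under the same hypothesis, leaving the sum as $[x, [T_{ij}^{ab},T_{jk}^{cd}] + [T_{ij}^{ac},T_{jk}^{bd}]]$, which is zero by \eqref{eq:22swap}. This is essentially the same argument packaged inside of Lemma~\ref{lem:comm-up}.

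The main obstacle is verifying that Lemma~\ref{lem:comm-up} (equivalently, Lemma~\ref{lem:commute-down}) really does apply to $T_{jk}^{cd}$, which requires both $j$ and $k$ to lie outside the support of $x$. The hypothesis as written asserts only $k>M$, but for the commutator $[x,T_{jk}^{cd}]$ to vanish one also needs $j>M$; I expect this is intended to be implicit in the setting in which Lemma~\ref{lem:outer-cyclic} is invoked during the completeness argument (where the outer pair $T_{jk}^{cd}$ is ``new'' with respect to $x$). If $j\le M$, so that $j$ appears as one of the indices $i_t$ in the canonical expression for $x$, then $[x,T_{jk}^{cd}]$ is nonzero, one would have to expand it via Lemma~\ref{lem:commute-down}, and track the resulting inserted terms against the $b\leftrightarrow c$ swap; I expect in that case one still gets cancellation after matching up pairs of inserted commutators, but the argument would be substantially more intricate than the clean three-line proof above.
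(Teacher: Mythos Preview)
Your proof is essentially identical to the paper's: apply Lemma~\ref{lem:comm-up} to collapse the outer bracket, use the $j=k$ case of \eqref{eq:22swap}, then apply Lemma~\ref{lem:comm-up} in reverse. The paper's proof is exactly these three lines.

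Your concern about needing $j>M$ (not just $k>M$) for Lemma~\ref{lem:comm-up} to apply is well-founded, but note that the paper's own proof invokes Lemma~\ref{lem:comm-up} without addressing this either; the hypothesis as stated in the paper (with a dangling $l$ that never appears in the conclusion) is evidently somewhat sloppy, and the intended reading is the one you identify, namely that the outer $T$'s support is disjoint from that of $x$. This is indeed how the lemma is used downstream (e.g.\ in Lemma~\ref{lem:comm1}), so your reading is the correct one and no further case analysis is needed.
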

\begin{proof}
Using Lemma~\ref{lem:comm-up}, we have
\begin{align*}
[[x, T_{ij}^{ab}], T_{jk}^{cd}] &= [x, [T_{ij}^{ab}, T_{jk}^{cd}]] \\
&= [x, -[T_{ij}^{ac}, T_{jk}^{bd}]] \\
&= -[[x, T_{ij}^{ac}], T_{jk}^{bd}]
\end{align*}
using Equation~\ref{eq:22swap} in the second line and Lemma~\ref{lem:comm-up} again in the third.
\end{proof}

\begin{lemma} \label{lem:comm1}
If $x \in A$, then $[x, S_{i}^{a}]$ is a linear combination of elements from $A$.
\end{lemma}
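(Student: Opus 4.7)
The plan is to induct on the length $m$ of $x \in A$. The base case $m = 1$ is immediate from \eqref{eq:11}: $[S_{i_1}^{b_1}, S_i^a]$ is either $0$ or $\sum_c \epsilon_{b_1 a c} S_{i_1}^c$, both in $\operatorname{span}(A)$.

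For the inductive step, I would write $x = [x^-, T_{i_{m-1}, i_m}^{a_{m-1} b_m}]$ and split cases on where $i$ sits among $\{i_1, \dots, i_m\}$. If $i \notin \{i_1, \dots, i_m\}$, Lemma~\ref{lem:commute-down} gives $[x, S_i^a] = 0$. If $i = i_m$, then $i$ is not an index of $x^-$, so by induction $[x^-, S_{i_m}^a] = 0$, and the Jacobi identity collapses to
\[
[x, S_{i_m}^a] = [x^-, [T_{i_{m-1}, i_m}^{a_{m-1} b_m}, S_{i_m}^a]] = -\sum_d \epsilon_{a b_m d}\,[x^-, T_{i_{m-1}, i_m}^{a_{m-1} d}]
\]
by \eqref{eq:12}; each summand already lies in $A$ because only the final (unconstrained) superscript $b_m$ has been altered. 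If $i = i_j$ with $j \leq m-2$, then $S_i^a$ commutes with the outer $T$ by \eqref{eq:12}, so Jacobi gives $[x, S_i^a] = [[x^-, S_i^a], T_{i_{m-1}, i_m}^{a_{m-1} b_m}]$; the inductive hypothesis writes $[x^-, S_i^a]$ as a linear combination of elements of $A$ with indices contained in $\{i_1, \dots, i_{m-1}\}$, and each outer bracket is then either $0$ by Lemma~\ref{lem:commute-down} (when the summand avoids $i_{m-1}$) or yields an element of $A$ directly, possibly after applying Lemma~\ref{lem:outer-cyclic} to restore the cyclic constraint when appending the outer $T$.

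The hard part will be the case $i = i_{m-1}$, where both Jacobi terms contribute. The first produces $-\sum_d \epsilon_{a a_{m-1} d}\,[x^-, T_{i_{m-1}, i_m}^{d b_m}]$, and one must analyze the superscript pair $(b_{m-1}, d)$ at position $m-1$ case by case: for $d = a_{m-1}$ the pair is already cyclic; for $d = b_{m-1}$ the outer bracket vanishes by \eqref{eq:22swap} applied to the two $T$'s sharing index $i_{m-1}$ with a repeated superscript; and for the remaining anti-cyclic choice of $d$, Lemma~\ref{lem:outer-cyclic} swaps the middle superscripts and produces a valid element of $A$. The second Jacobi term $[[x^-, S_{i_{m-1}}^a], T_{i_{m-1}, i_m}^{a_{m-1} b_m}]$ is treated using the inductive hypothesis together with Lemma~\ref{lem:outer-cyclic} as in the preceding case. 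The only real subtlety is the bookkeeping of cyclic superscript constraints under repeated applications of Lemma~\ref{lem:outer-cyclic} and \eqref{eq:22swap}.
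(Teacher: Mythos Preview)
Your argument is correct and uses the same ingredients as the paper (Jacobi, \eqref{eq:11}, \eqref{eq:12}, Lemma~\ref{lem:commute-down}, Lemma~\ref{lem:outer-cyclic}), but the organization differs. The paper does not induct on $m$: it invokes Lemma~\ref{lem:commute-down} once to slide $S_i^a$ inward to the level where $i$ actually occurs (smallest, largest, or a single middle position), performs the Jacobi expansion \emph{locally} there, simplifies with \eqref{eq:11} or \eqref{eq:12}, and restores the cyclic condition via Lemma~\ref{lem:outer-cyclic}; the outer brackets are then reattached unchanged by bilinearity. Your induction peels off the outermost $T$ at every step and reattaches it, which amounts to unrolling the paper's single use of Lemma~\ref{lem:commute-down} into a recursion. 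The paper's version is cleaner because it avoids having to track through the recursion that the elements produced by the inductive hypothesis still have $i_{m-1}$ as their largest index (with the outermost superscript $b_{m-1}$ unchanged)---something you rely on in the $j \le m-2$ case but have not built into your induction hypothesis. One further nit: in your $i = i_{m-1}$ analysis the case $d = a_{m-1}$ carries coefficient $\epsilon_{a\,a_{m-1}\,a_{m-1}} = 0$, so that term simply vanishes rather than being ``already cyclic''.
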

\begin{proof}
If $i$ does not appear in any subscript in the expression for $x$, then by Lemma~\ref{lem:commute-down} the commutator is 0. If it appears as the smallest index, then by Lemma~\ref{lem:commute-down} we can write
\begin{align*}
[x, S_{i}^{a}] &= [[ \cdots [[[S_{i_{1}}^{b_{1}}, S_{i_{1}}^{a}], T_{i_{1}i_{2}}^{a_{1}b_{2}}], T_{i_{2}i_{3}}^{a_{2}b_{3}}], \dots], T_{i_{m - 1}, i_{m}}^{a_{m - 1}b_{m}}].
\end{align*}
We can then apply Equation~\ref{eq:11} to simplify the innermost commutator. If the cyclic ordering condition does not hold for the resulting innermost commutator, we can apply Equation~\ref{eq:12} so that it does, possibly negating the expression.

If $i$ appears as the largest index, then similarly we can write $x = [x', T_{ji}^{bc}]$ and get
\begin{align*}
[x, S_{i}^{a}] &= [[x', T_{ji}^{bc}], S_{i}^{a}] \\
&= [[x', S_{i}^{a}], T_{ji}^{bc}] + [x', [T_{ji}^{bc}, S_{i}^{a}]] \\
&= [x', [T_{ji}^{bc}, S_{i}^{a}]]
\end{align*}
which can then be simplified using Equation~\ref{eq:12}. If it appears as some index which is neither the smallest nor largest, then we apply Lemma~\ref{lem:commute-down}. Working just with the part of the expression with the commutator involving $S_{i}^{a}$, we have a commutator of the form
\begin{align*}
&\phantom{{}={}} [[[x', T_{ji}^{bc}], T_{ik}^{de}], S_{i}^{a}] \\
&= [[x', T_{ji}^{bc}], [T_{ik}^{de}, S_{i}^{a}]] + [[x', [T_{ji}^{bc}, S_{i}^{a}]], T_{ik}^{de}] + [[[x', S_{i}^{a}], T_{ji}^{bc}], T_{ik}^{de}] \\
&= [[x', T_{ji}^{bc}], [T_{ik}^{de}, S_{i}^{a}]] + [[x', [T_{ji}^{bc}, S_{i}^{a}]], T_{ik}^{de}]
\end{align*}
applying Lemma~\ref{lem:commute-down} again. By applying Equation~\ref{eq:12} both of the resulting expressions can be simplified, and by Lemma~\ref{lem:outer-cyclic} we can guarantee the cyclic ordering condition holds. Since the bracket is bilinear, we can apply the same argument when the part of the expression outside of the commutator involving $S_{i}^{a}$ is included.
\end{proof}

\begin{lemma}\label{lem:comm10}
If $x \in A$ and $T_{ij}^{ab}$ is such that only $i$ and not $j$ appears in a subscript in the expression for $x$, then $[x, T_{ij}^{ab}]$ is a linear combination of elements of $A$.
\end{lemma}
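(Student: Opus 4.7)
My strategy is to mirror the case analysis from the proof of Lemma~\ref{lem:comm1}, adapted to the setting where a new index $j$ must be inserted into the chain. Write $x = [\cdots [[S_{i_1}^{b_1}, T_{i_1, i_2}^{a_1, b_2}], \cdots], T_{i_{m-1}, i_m}^{a_{m-1}, b_m}]$ and let $k$ be the unique index with $i = i_k$. By Lemma~\ref{lem:commute-down}, $T_{ij}^{ab}$ can be moved to the position immediately after $T_{i_k, i_{k+1}}^{a_k, b_{k+1}}$ in the commutator chain when $k < m$, or appended at the end when $k = m$. The aim is then to rewrite this localized expression as a linear combination of chains in $A$ on sites $\{i_1, \ldots, i_m, j\}$ with $j$ inserted in its correct position in the total order.

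The case $k = m$ is simplest: the expression $[[y, T_{i_{m-1}, i_m}^{a_{m-1}, b_m}], T_{i_m, j}^{ab}]$ is analyzed via the Jacobi identity combined with Lemma~\ref{lem:commute-down} (which gives $[y, T_{i_m, j}^{ab}] = 0$), followed by \eqref{eq:22swap} (case $j = k$, which swaps the values at site $i_m$) and a second Jacobi step. Three subcases emerge: if $a = b_m$ the expression vanishes by antisymmetry; if $(b_m, a)$ is already in cyclic order the expression lies directly in $A$; and otherwise $(a, b_m)$ is in cyclic order and we obtain $-1$ times an element of $A$. For $k < m$, iterated applications of Jacobi past the later generators $T_{i_l, i_{l+1}}^{a_l, b_{l+1}}$ (which commute with $T_{ij}^{ab}$ since $j \neq i_l$) confine the analysis to the inner bracket $[x_{k+1}, T_{i_k, j}^{ab}] = [[x_k, T_{i_k, i_{k+1}}^{a_k, b_{k+1}}], T_{i_k, j}^{ab}]$. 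A Jacobi split yields a term $[[x_k, T_{i_k, j}^{ab}], T_{i_k, i_{k+1}}^{a_k, b_{k+1}}]$, whose inner commutator is handled by the $k = m$ analysis applied to the shorter chain $x_k$ (in which $i_k$ is the largest index), together with a term $[x_k, [T_{i_k, i_{k+1}}^{a_k, b_{k+1}}, T_{i_k, j}^{ab}]]$ whose inner two-$T$ commutator is reorganized using \eqref{eq:22swap} (case $i = k$).

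The main technical obstacle is the sub-case $k < m$ in which $j$ lands strictly between two consecutive chain indices $i_l$ and $i_{l+1}$ in the ambient order, because the canonical $A$-form then requires the original single generator $T_{i_l, i_{l+1}}^{a_l, b_{l+1}}$ to be replaced by two linked $T$s routed through the new site $j$. The Jacobi expansions produce $[[T, T], T]$ nests that do not directly sit in the chain form of $A$; reducing them requires invoking the three-term relation \eqref{eq:222} (and \eqref{eq:2222} when a fourth $T$ appears) to express them in $g_n'$ form, followed by a backward application of Lemma~\ref{lem:comm1} to convert back into the $S$-initial chain form defining $A$. Tracking the signs and verifying the cyclic-ordering constraint at every intermediate site throughout is tedious but essentially mechanical once this reduction strategy is in place.
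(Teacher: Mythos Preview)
Your proposed Jacobi split in the $k<m$ case does not terminate. After splitting
\[
[[x_k,T_{i_k i_{k+1}}^{a_k b_{k+1}}],T_{i_k j}^{ab}]
= [[x_k,T_{i_k j}^{ab}],T_{i_k i_{k+1}}^{a_k b_{k+1}}]
+ [x_k,[T_{i_k i_{k+1}}^{a_k b_{k+1}},T_{i_k j}^{ab}]],
\]
you correctly note that $[x_k,T_{i_k j}^{ab}]$ is (up to sign) a chain $z\in A$ on $(i_1,\dots,i_k,j)$ by the $k=m$ analysis. But the remaining outer bracket $[z,T_{i_k i_{k+1}}^{a_k b_{k+1}}]$ is again exactly an instance of the lemma: $z$ contains $i_k$ in second-to-last position and not $i_{k+1}$. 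Re-applying your split to this produces another first term of the same shape, bouncing indefinitely between inserting $j$ and inserting $i_{k+1}$. The second term has the same defect once you expand $[x_k,[T,T]]$ by Jacobi, since $x_k$ shares the site $i_k$ with both inner $T$'s.

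The paper avoids this circularity by peeling off one more layer: writing the inner piece as $[[x',T_{i_{k-1} i_k}],T_{i_k i_{k+1}}]$ with $x'$ supported on $i_1,\dots,i_{k-1}$, so that Lemma~\ref{lem:comm-up} (not a bare Jacobi split) applies twice and cleanly isolates the nest $[[T_{i_{k-1} i_k},T_{i_k i_{k+1}}],T_{i_k j}]$ inside a single bracket with $x'$. That nest is rewritten in one step by \eqref{eq:222}, and the outer commutators $T_{i_{k+1}i_{k+2}},\dots$ are then absorbed by iterating the same move. You also never invoke \eqref{eq:221}, which is the paper's base case when $i=i_1$ and the innermost expression has the $[[S,T],T]$ form. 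Finally, \eqref{eq:2222} is not used in this lemma (it enters only in Lemma~\ref{lem:comm11}), and Lemma~\ref{lem:comm1} concerns $[x,S]$ and does not convert $g_n'$-form chains into the $S$-initial $g_n$-form as you suggest.
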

\begin{proof}
If $i$ appears as the largest index, then the lemma follows directly. Thus we assume that $i$ does not appear as the largest index.

Applying Lemma~\ref{lem:commute-down}, we use induction on the number of commutators outside the position where $T_{ij}^{ab}$ was put. For the case when $T_{ij}^{ab}$ is put at the innermost level, we apply Equation~\ref{eq:221}. Otherwise, working just with the commutator involving $T_{ij}^{ab}$, we have an expression of the form
\begin{align*}
[[[x', T_{ki}^{cd}], T_{il}^{ef}], T_{ij}^{ab}] &= [[x', [T_{ki}^{cd}, T_{il}^{ef}]], T_{ij}^{ab}] \\
&= [x', [[T_{ki}^{cd}, T_{il}^{ef}], T_{ij}^{ab}]]
\end{align*}
by applying Lemma~\ref{lem:comm-up} twice. Applying Equation~\ref{eq:222}, the second factor in the commutator can be rewritten so that the indices in the subscripts appear in order. Then applying Lemma~\ref{lem:comm-up} twice again in reverse, we obtain an expression which is a linear combination of elements of $A$.

For the commutators at the outer levels, we do the following. Suppose using the notation of Lemma~\ref{lem:commute-down} that $T_{ij}^{ab}$ was put at level $k$. Then, after having applied the above rewriting, $T_{i_{k + 1},i_{k+2}}^{a_{k+1},b_{k+2}}$ has its first subscript included in the rewritten expression and not the second subscript. Concretely, including the commutator at level $k + 1$ we have either an expression of the form
\[ [[[[x', T_{ki}^{ab}], T_{ij}^{cd}], T_{jl}^{ef}], T_{lr}^{gh}] \]
if $j < l$, or
\[ [[[[x', T_{ki}^{ab}], T_{il}^{cd}], T_{lj}^{ef}], T_{lr}^{gh}] \]
if $j > l$, where $l, r, g, h$ are chosen such that $T_{lr}^{gh} = T_{i_{k + 1},i_{k+2}}^{a_{k+1},b_{k+2}}$.
Thus the same rewriting can be applied inductively to $T_{i_{k + 1},i_{k+2}}^{a_{k+1},b_{k+2}}$ and the rewritten expression, with the base case being that $T_{ij}^{ab}$ was put at the outermost level to begin with.
\end{proof}

\begin{lemma}\label{lem:comm01}
If $x \in A$ and $T_{ij}^{ab}$ is such that only $j$ and not $i$ appears in a subscript in the expression for $x$, then $[x, T_{ij}^{ab}]$ is a linear combination of elements of $A$.
\end{lemma}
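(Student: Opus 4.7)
The plan is to mirror the proof of Lemma~\ref{lem:comm10} with the roles of the two subscripts of $T_{ij}^{ab}$ interchanged. Since only $j$ (and not $i$) appears in $x$, write $j = i_k$ for the unique $k$ with $i_k = j$. First I would apply Lemma~\ref{lem:commute-down} to move $T_{ij}^{ab}$ to level $k$, producing an expression whose innermost nontrivial commutator involves $T_{i_{k-1}i_k}^{a_{k-1}b_k}$ (or $S_{i_1}^{b_1}$ when $k = 1$) bracketed with $T_{ij}^{ab}$; both share the index $j = i_k$. The new feature compared to Lemma~\ref{lem:comm10} is that the freshly introduced index $i$ is \emph{smaller} than $j$, so after rewriting it must be inserted at a lower position in the canonical chain of an element of $A$, possibly becoming a new minimum in which case the chain must begin with $S_i^{b'}$ instead of $S_{i_1}^{b_1}$.

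For the base case $k = 1$, Equation~\ref{eq:12} gives $[S_j^{b_1}, T_{ij}^{ab}] = \sum_e \epsilon_{b_1 b e} T_{ij}^{ae}$, and each resulting lone generator $T_{ij}^{ae}$ can be rewritten as $[S_i^{b'}, T_{ij}^{a'e}]$ by applying Equation~\ref{eq:12} in reverse, where $(b', a')$ is the unique pair in $\{(1,2),(2,3),(3,1)\}$ with $\epsilon_{b'a'a}$ nonzero. This yields the desired canonical form whose chain begins at $i$. For $k \geq 2$, I would use Jacobi together with the fact that the prefix of $x$ below level $k-1$ contains neither $i$ nor $j$, so that its commutator with $T_{ij}^{ab}$ vanishes by Lemma~\ref{lem:commute-down}; this factors the expression into the form $[\text{prefix},\,[[T_{i_{k-2}i_{k-1}}^{a_{k-2}b_{k-1}}, T_{i_{k-1}i_k}^{a_{k-1}b_k}], T_{ij}^{ab}]]$, and Equation~\ref{eq:222} then rewrites the inner triple commutator into the $g_n'$-normal form whose subscripts are in increasing order, thereby inserting $i$ at its correct position. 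The outer commutators at levels $k+1, \ldots, m-1$ would be handled by repeatedly applying Lemma~\ref{lem:comm-up} to bring them inside the prefix, invoking Equations~\ref{eq:222} and \ref{eq:2222} to rewrite three- and four-fold nested commutators of $T$'s into canonical form, with Lemma~\ref{lem:outer-cyclic} enforcing the cyclic-order condition on the superscripts.

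The main obstacle will be handling the base case $k = 1$ with $i < i_1$ cleanly: the conversion of a bare $T_{ij}^{ae}$ into $[S_i^{b'}, T_{ij}^{a'e}]$, followed by forming the next commutator with $T_{j i_2}^{a_1 b_2}$, may produce cyclic-order violations on the newly formed link that have to be fixed using Equation~\ref{eq:22swap}, possibly cascading down the chain. A secondary subtlety, shared with Lemma~\ref{lem:comm10}, is the bookkeeping required to ensure that every intermediate expression in the rewriting really does lie in the span of $A$, rather than requiring auxiliary elements outside $A$ that are only equal to $A$-combinations after further rewriting.
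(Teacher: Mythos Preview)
Your outline---commute $T_{ij}^{ab}$ down to where $j$ sits, peel off a prefix via Jacobi, and invoke Equation~\ref{eq:222}---is the same shape as the paper's argument, but the proposal is missing the recursive step that makes it work. Your assertion that Equation~\ref{eq:222} ``inserts $i$ at its correct position'' is only valid when $i$ happens to fall among the handful of indices in the triple commutator you are rewriting. If instead $i$ is smaller than all of them (the generic case), the $g_n'$-normal form places $i$ at the left end of that triple, but $i$ must still be threaded further down into the prefix containing $i_1, \dots, i_{k-2}$, and you give no mechanism for that. The paper handles exactly this by inducting on the number of chain indices between $i$'s target slot and the current position of $j$: after one application of Equation~\ref{eq:222} the expression becomes, up to outer brackets, $[x', T_{ir}^{\cdots}]$ with $r$ one step below $j$, which is a strictly smaller instance of Lemma~\ref{lem:comm01} itself and is dispatched by the induction hypothesis. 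Directly mirroring Lemma~\ref{lem:comm10} does not suffice here, because there the new index is propagated \emph{outward} through the chain, whereas here it must be propagated \emph{inward} through the prefix.

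There is also a persistent off-by-one in your use of Lemma~\ref{lem:commute-down}. Both $T_{i_{k-1}i_k}$ and $T_{i_k i_{k+1}}$ carry the subscript $j = i_k$, so the lemma places $T_{ij}^{ab}$ immediately after $T_{i_k i_{k+1}}$, not after $T_{i_{k-1}i_k}$. The inner block is therefore $[[T_{i_{k-1}j}, T_{j i_{k+1}}], T_{ij}]$, matching the paper's $[[T_{rj}, T_{jl}], T_{ij}]$, and in the base case $k = 1$ it is $[[S_{i_1}^{b_1}, T_{i_1 i_2}^{a_1 b_2}], T_{ij}^{ab}]$---precisely the shape handled by Equation~\ref{eq:221}. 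Your computation via Equation~\ref{eq:12} of a bare $[S_j^{b_1}, T_{ij}^{ab}]$ only covers the degenerate case $x = S_{i_1}^{b_1}$.
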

\begin{proof}
Let $k$ be such that $i_{k} < i < i_{k + 1}$. We use induction on the number of indices in between $i_{k}$ and the position where $j$ appears in the expression for $x$. For the case when $j$ is already the smallest index appearing in $x$, we can apply Equation~\ref{eq:221}. Otherwise we have an expression of the form
\begin{align*}
[[[x', T_{rj}^{ab}], T_{jl}^{cd}], T_{ij}^{ef}] &= [[x', [T_{rj}^{ab}, T_{jl}^{cd}]], T_{ij}^{ef}] \\
&= [x', [[T_{rj}^{ab}, T_{jl}^{cd}], T_{ij}^{ef}]]
\end{align*}
by applying Lemma~\ref{lem:comm-up} twice. Applying Equation~\ref{eq:222}, the second factor can be rewritten so that the indices in the subscripts appear in order. Then applying Lemma~\ref{lem:comm-up} twice again in reverse, we get an expression of the form
\[ [[[x', T_{ir}^{ab}], T_{rj}^{cd}], T_{jl}^{ef}] \]
and can then apply the induction hypothesis to $T_{ir}^{ab}$ and $x'$. The base case is when the commutator involving $j$ already included $i$, in which case one can apply Equation~\ref{eq:22both}.

For the outer commutators which are outside the position where index $j$ appears, applying Lemma~\ref{lem:commute-down} and bilinearity of the Lie bracket then finishes the proof of the lemma.
\end{proof}

\begin{lemma}\label{lem:comm11}
If $x \in A$ and $T_{ij}^{ab}$ is such that both $i$ and $j$ appear in a subscript in the expression for $x$, then $[x, T_{ij}^{ab}]$ is a linear combination of elements of $A$.
\end{lemma}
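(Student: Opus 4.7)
The plan is to proceed analogously to the proofs of Lemmas~\ref{lem:comm10} and~\ref{lem:comm01}. First, apply Lemma~\ref{lem:commute-down} to insert $T_{ij}^{ab}$ at the position corresponding to the larger shared index $j = i_t$, yielding
\[
[x, T_{ij}^{ab}] = [[\cdots[[x_t, T_{ij}^{ab}], T_{i_t i_{t+1}}^{a_t b_{t+1}}], \ldots], T_{i_{m-1}i_m}^{a_{m-1}b_m}],
\]
where $x_t$ denotes the prefix of $x$ ending at position $t$. Each outer commutator involves a $T$-generator sharing at most one index (namely $i_t$) with the running subexpression to its left, so these outer commutators are handled by Lemma~\ref{lem:comm10} and Lemma~\ref{lem:commute-down}. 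The task therefore reduces to showing that $[x_t, T_{ij}^{ab}]$ is a linear combination of elements of $A$.

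This inner statement is proved by induction on $t$. The base case $t = 2$ produces the triple commutator $[[S_i^{b_1}, T_{ij}^{a_1 b_2}], T_{ij}^{ab}]$, which lies in $\operatorname{span}(A)$ by Equation~\eqref{eq:221}. For the inductive step, write $x_t = [x_{t-1}, T_{i_{t-1} j}^{a_{t-1} b_t}]$ and apply the Jacobi identity to obtain
\[
[x_t, T_{ij}^{ab}] = [x_{t-1}, [T_{i_{t-1} j}^{a_{t-1} b_t}, T_{ij}^{ab}]] + [[x_{t-1}, T_{ij}^{ab}], T_{i_{t-1} j}^{a_{t-1} b_t}].
\]
For the first term, simplify the inner $TT$-commutator using Equation~\eqref{eq:22both} (when $i_{t-1}=i$, yielding a $1$-local operator to which Lemma~\ref{lem:comm1} applies) or Equation~\eqref{eq:22swap} (when $i_{t-1}>i$, yielding another $TT$-commutator which, after one further Jacobi step, becomes a triple commutator of $T$-generators that Equation~\eqref{eq:222} collapses to $g_n'(f_n(\cdot))$). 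For the second term, Lemma~\ref{lem:comm10} applied to $[x_{t-1}, T_{ij}^{ab}]$ (valid since $x_{t-1}$ contains $i$ but not $j$) produces a linear combination of elements of $A$, and the remaining outer commutator with $T_{i_{t-1} j}^{a_{t-1} b_t}$ is resolved by a case analysis combining Lemma~\ref{lem:commute-down}, Lemma~\ref{lem:comm10}, Lemma~\ref{lem:comm01}, or the inductive hypothesis for smaller $t$.

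The hard part will be showing that this recursion genuinely terminates: the Jacobi rewriting of the second term can produce intermediate elements of $A$ that still carry both indices $i_{t-1}$ and $j$, which na\"ively invokes Lemma~\ref{lem:comm11} at the same value of $t$. The plan to close the loop is to push $T_{ij}^{ab}$ one level deeper using Lemma~\ref{lem:comm-up} together with the vanishing $[S_{i_1}^{b_1}, T_{ij}^{ab}] = 0$ (guaranteed by Equation~\eqref{eq:12} when $i_1 \notin \{i, j\}$), thereby isolating a triple or, when $t$ is larger, quadruple commutator purely among the $T$-generators at the innermost level. Equation~\eqref{eq:222} or~\eqref{eq:2222} then rewrites this sub-commutator as $g_n'(f_n(\cdot))$, which by construction is a definite linear combination of elements of $A$, and the remaining outer commutator with $S_{i_1}^{b_1}$ is cleared by Lemma~\ref{lem:comm1}. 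The principal technical difficulty is the bookkeeping needed to identify, at each stage of the Jacobi expansion, precisely which nested $T$-sub-commutator should be collapsed so that the final expression is reassembled in the canonical $A$-form.
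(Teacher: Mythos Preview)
Your proposal uses the right toolbox (Jacobi, Lemma~\ref{lem:commute-down}, Lemma~\ref{lem:comm-up}, Equations~\eqref{eq:221}--\eqref{eq:2222}) and the overall shape is close to the paper's argument, but there is a genuine gap: the induction variable you have chosen does not decrease. You induct on $t$, the position of $j$ in $x$. In the second Jacobi term you apply Lemma~\ref{lem:comm10} to $[x_{t-1},T_{ij}^{ab}]$ and obtain a linear combination of elements $y\in A$ supported in $\{i_1,\dots,i_{t-1},j\}$; for such a $y$, the index $j$ is still the largest one and still sits at position at most $t$ (not $t-1$), while $i_{t-1}$ sits immediately before it. Thus $[y,T_{i_{t-1}j}^{a_{t-1}b_t}]$ is again a ``both indices present'' commutator with $j$ at position $\leq t$, and you are back in Lemma~\ref{lem:comm11} at the same value of the induction parameter. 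You flag this yourself, but the proposed fix---pushing $T_{ij}^{ab}$ deeper and invoking \eqref{eq:222} or \eqref{eq:2222} on an isolated nested $T$-commutator---is not a well-defined terminating procedure as stated; in particular your treatment of the first Jacobi term in the case $i_{t-1}>i$ never reaches a triple $T$-commutator without re-expanding $[x_t,T_{ij}^{ab}]$ itself. (There is also a small off-by-one: Lemma~\ref{lem:commute-down} inserts $T_{ij}^{ab}$ \emph{after} $T_{i_t,i_{t+1}}$, so the prefix is $x_{t+1}$, not $x_t$.)

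The paper's proof resolves exactly this by choosing a different induction variable: the number of indices of $x$ lying strictly between $i$ and $j$. The point is that, before applying Jacobi, one uses Lemma~\ref{lem:comm-up} to bundle the two $T$-generators at each of the sites $i$ and $j$ into bracketed pairs $[T_{ki},T_{il}]$ and $[T_{rj},T_{js}]$. The Jacobi split then produces two terms in which $T_{ij}^{ab}$ has been absorbed into one of these pairs via \eqref{eq:222}, yielding (after re-expanding with Lemma~\ref{lem:comm-up}) a new overlap pair $(l,j)$ or $(i,r)$ with one fewer intermediate index. The base case, when $i$ and $j$ are adjacent in $x$, is handled directly by the quadruple relation~\eqref{eq:2222}. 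That choice of induction variable is the missing idea that makes the recursion terminate; once you switch to it, your remaining bookkeeping falls into place.
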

\begin{proof}
We use induction on the number of indices in between $i$ and $j$ which appear in the expression for $x$. For the base case when there are no such indices, using Lemma~\ref{lem:commute-down} to work just with the relevant inner part gives an expression of the form
\begin{align*}
[[[[x', T_{ki}^{cd}], T_{ij}^{ef}], T_{jl}^{gh}], T_{ij}^{ab}] &= [[[x', [T_{ki}^{cd}, T_{ij}^{ef}]], T_{jl}^{gh}], T_{ij}^{ab}] \\
&= [[x', [[T_{ki}^{cd}, T_{ij}^{ef}], T_{jl}^{gh}]], T_{ij}^{ab}] \\
&= [x', [[[T_{ki}^{cd}, T_{ij}^{ef}], T_{jl}^{gh}], T_{ij}^{ab}]]
\end{align*}
applying Lemma~\ref{lem:comm-up} three times. Then using Equation~\ref{eq:2222} and applying Lemma~\ref{lem:comm-up} three times again in reverse gives an expression which is a linear combination of elements of $A$.

For the induction step, we use Lemma~\ref{lem:comm-up} twice to combine the commutators where $i$ and $j$ appear, obtaining an expression of the form
\[ [[ \cdots [[x', [T_{ki}^{cd}, T_{il}^{ef}]], \cdots], [T_{rj}^{gh}, T_{js}^{\alpha \beta}]] \]
ignoring the parts of the expression involving indices larger than $j$. By the Jacobi rule and Lemma~\ref{lem:commute-down}, taking the commutator of this expression with $T_{ij}^{ab}$ gives an expression of the form
\[ [ \cdots [x', [[T_{ki}^{cd}, T_{il}^{ef}], T_{ij}^{ab}]], \cdots ] + [ \cdots [[x', [T_{ki}^{cd}, T_{il}^{ef}]], \cdots], [[T_{rj}^{gh}, T_{js}^{\alpha \beta}], T_{ij}^{ab}]]. \]
We inductively rewrite the two terms separately. For the second, we again use Equation~\ref{eq:222} similarly as in the induction steps of Lemma~\ref{lem:comm10} and Lemma~\ref{lem:comm01}, along with Lemma~\ref{lem:comm-up} used in reverse twice to obtain an expression of the form
\[ [[[[ \cdots [x', [T_{ki}^{cd}, T_{il}^{ef}]], \cdots], T_{ir}^{ab}], T_{rj}^{cd}], T_{js}^{ef}] \]
and then the induction hypothesis can be applied to the inner expression and $T_{ir}^{ab}$. For the first term, we apply a similar argument to obtain an expression of the form
\[ [[[x', T_{ki}^{ab}], T_{il}^{ef}], T_{lj}^{gh}] \]
in the inner part, after which the same argument can be applied inductively with $T_{lj}^{gh}$ and the rest of the expression on the outside.
\end{proof}

\begin{proposition}
The kernel of the homomorphism $\mathfrak{g}_{n} \to \mathfrak{su}(2^{n})$ induced by $f_{n}$ is 0.
\end{proposition}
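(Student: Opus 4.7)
The plan is to show the image of $A_n$ in $\mathfrak{g}_n$ is a spanning set and that $f_n$ carries $A_n$ bijectively onto a basis of $\mathfrak{su}(2^n)$. Together with surjectivity of the induced map (Proposition \ref{prop:surjective}), this forces $\dim \mathfrak{g}_n = \dim \mathfrak{su}(2^n) = 4^n - 1$, so the induced map is a linear bijection and hence its kernel is trivial.

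To prove $\operatorname{span}(A_n) = \mathfrak{g}_n$, I induct on the depth of a Lie polynomial in the generators. Every Lie polynomial is a linear combination of left-normed expressions $[\cdots[[x,z_1],z_2],\ldots,z_k]$ with $x$ and each $z_j$ a generator: this follows by iterated application of Lemma \ref{lem:right}. In the base case, $S_i^a \in A_n$ directly, while each generator $T_{ij}^{ab}$ lies in $\operatorname{span}(A_n)$ modulo the relations, since by \eqref{eq:12} it can be written as a bracket $[S_i^{c}, T_{ij}^{db}]$ of the form appearing in $A_n$ with $m=2$, choosing $(c,d)\in\{(1,2),(2,3),(3,1)\}$ so that $\epsilon_{cda}\neq 0$.

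The inductive step is exactly what Lemmas \ref{lem:comm1}, \ref{lem:comm10}, \ref{lem:comm01}, and \ref{lem:comm11} accomplish: for $x \in A_n$ and any generator $g$, the bracket $[x,g]$ lies in $\operatorname{span}(A_n)$. These four lemmas cover $g = S_i^a$ and $g = T_{ij}^{ab}$ according to whether $i$, $j$, both, or neither of the indices appear in the support of $x$; the last case is immediate from Lemma \ref{lem:commute-down}. Iterating this along a left-normed bracket keeps the result in $\operatorname{span}(A_n)$ at each stage, so $\mathfrak{g}_n = \operatorname{span}(A_n)$.

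For the dimension count, Proposition \ref{prop:1side-inv} shows that $f_n|_{A_n}$ is a bijection onto the set of non-identity Pauli strings, a basis of $\mathfrak{su}(2^n)$ of size $\sum_{m=1}^n \binom{n}{m}3^m = 4^n - 1$. Consequently $|A_n| = \dim \mathfrak{su}(2^n)$, and combining this with the span statement and surjectivity of the induced map forces $\dim \mathfrak{g}_n = 4^n - 1$ with $f_n$ a linear bijection; in particular $\ker f_n = 0$. The principal technical effort has already been carried out in the preceding lemmas, so what remains here is mostly organizing the reduction to normal form with the dimension count, together with checking that the $m=2$ case of the base step is consistent with the cyclic ordering condition in the definition of $A_n$.
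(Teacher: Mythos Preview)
Your proof is correct and follows essentially the same strategy as the paper's: show that $A_n$ spans $\mathfrak{g}_n$ using Lemmas~\ref{lem:comm1}--\ref{lem:comm11}, then combine with surjectivity and the count $|A_n|=4^n-1$ to force the induced map to be a linear bijection. The only cosmetic differences are that you reduce to left-normed brackets up front via Lemma~\ref{lem:right} and make the base case $T_{ij}^{ab}\in\operatorname{span}(A_n)$ explicit, whereas the paper inducts directly on depth and invokes Lemma~\ref{lem:right} only in the final case.
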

\begin{proof}
From Proposition~\ref{prop:surjective}, we know that the induced map is surjective. Thus it suffices to show that the dimension of $\mathfrak{g}_{n}$ as a vector space is finite and bounded above by $\dim \mathfrak{su}(2^{n}) = 4^{n} - 1$. To this end, we claim that any element of $\mathfrak{g}_{n}$ is a linear combination of elements from the set $A$, each of which are written in a particular normal form. Concretely, each element of $A$ is the image of a Pauli operator under $g_{n}$, and is sent to an $m$-local Pauli operator under $f_{n}$.

As in Lemma~\ref{lem:right}, we use induction on the depth of the Lie polynomials defining the given elements. The base case is when the expression is one of the generators. For the induction step, we consider a commutator of the form $[x, y]$. The induction hypothesis implies that $x$ and $y$ are equal to linear combinations of elements of $A$. By linearity, it then suffices to consider the commutator of two elements of $A$. We proceed by cases on the type of element of $A$. The commutator of two generators of the form $S_{i}^{a}$ and $S_{j}^{b}$ is another such generator or 0.

For the second case, we consider the commutator of $S_{i}^{a}$ and an element of $A$. Then Lemma~\ref{lem:comm1} shows that this is a linear combination of elements of $A$.

Finally, we consider the commutator of two elements of $A$ which are both not generators. Then Lemma~\ref{lem:right} implies that is suffices to consider the commutator of an element of $A$ with a generator $T_{ij}^{ab}$ or $S_{i}^{a}$. The second case follows from Lemma~\ref{lem:comm1} as before, and the first follows from either Lemma~\ref{lem:commute-down}, Lemma~\ref{lem:comm01}, Lemma~\ref{lem:comm10}, or Lemma~\ref{lem:comm11} depending on whether $i$ or $j$ are included in the subscripts in the element of $A$ or not.
\end{proof}

Combining the two propositions gives the following theorem.

\begin{theorem} \label{thm:genrel}
The homomorphism $f_{n}$ induces an isomorphism of Lie algebras $\mathfrak{g}_{n} \cong \mathfrak{su}(2^{n})$.
\end{theorem}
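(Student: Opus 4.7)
The plan is to obtain the isomorphism by directly combining the two propositions already established just before the theorem. Proposition~\ref{prop:surjective} gives a well-defined Lie algebra homomorphism $\bar f_n : \mathfrak{g}_n \to \mathfrak{su}(2^n)$ induced by $f_n$, and the proposition immediately preceding the theorem shows that $\bar f_n$ is both surjective and has trivial kernel. A Lie algebra homomorphism that is bijective as a linear map is automatically an isomorphism of Lie algebras, so no additional calculation is required.

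Were I to write out the argument in full from scratch, I would proceed in three steps. First, I would verify that $\bar f_n$ exists by checking that each family of relations in \eqref{eq:rels} is sent to $0$ by $f_n$; this is exactly the content of Proposition~\ref{prop:surjective}, which handles the commutator relations \eqref{eq:11}--\eqref{eq:22both} by direct computation with Pauli matrices and handles the rewriting relations \eqref{eq:222}--\eqref{eq:221} by applying Proposition~\ref{prop:1side-inv}. Second, I would establish surjectivity by exhibiting an explicit preimage of each Pauli string: given $\sigma_{i_1}^{c_1}\cdots \sigma_{i_m}^{c_m} \in \mathfrak{su}(2^n)$, its image $g_n(\sigma_{i_1}^{c_1}\cdots \sigma_{i_m}^{c_m}) \in \mathfrak{f}_n$ projects to an element of $\mathfrak{g}_n$ whose image under $\bar f_n$ is the original Pauli string, again by Proposition~\ref{prop:1side-inv}. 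Third, I would establish injectivity through a dimension count: since $\dim \mathfrak{su}(2^n) = 4^n - 1$ and $\bar f_n$ is surjective, it suffices to show that $\mathfrak{g}_n$ is spanned as a vector space by at most $4^n - 1$ elements, and the natural candidate is the image of the set $A_n$ under the quotient map, which has exactly this cardinality (one element for each nonidentity Pauli string on $n$ qubits).

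The one step with genuine content is the third, and the main obstacle is the normal form claim that every Lie polynomial in the generators $S_i^a, T_{ij}^{ab}$ reduces modulo the relations \eqref{eq:rels} to a linear combination of the normal-form monomials in $A_n$. This has already been carried out by the sequence of Lemmas \ref{lem:right}, \ref{lem:commute-down}, \ref{lem:comm-up}, \ref{lem:outer-cyclic}, \ref{lem:comm1}, \ref{lem:comm10}, \ref{lem:comm01}, and \ref{lem:comm11}, which together implement an induction on Lie-polynomial depth, reducing any bracket to a sum of right-nested commutators with generators and then showing that the commutator of an element of $A_n$ with any single generator lies in the span of $A_n$ by case analysis on which subscripts overlap. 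Granted that normal-form result, Theorem~\ref{thm:genrel} is a formal consequence, and the proof reduces to a one-line invocation of the two propositions.
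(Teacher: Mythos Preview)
Your proposal is correct and matches the paper's approach exactly: the paper's entire proof of Theorem~\ref{thm:genrel} is the single sentence ``Combining the two propositions gives the following theorem,'' and your write-up is a faithful (and more detailed) expansion of precisely that combination.
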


\section{The hierarchy}\label{sec:hierarchy}
We will define an SDP hierarchy of lower bounds on $\lambda_{1}(H) + \lambda_{2}(H)$ as the NPA hierarchy associated to a given system of noncommutative polynomial equations. The set of variables is
\[ V_{n} \coloneqq \{S_{i}^{a} : i \in [n], a \in [3]\} \cup \{T_{ij}^{ab} : i, j \in [n], a, b \in [3]\} \]
subject to the noncommutative constraints
\begin{subequations} \label{eq:sys}
\begin{align}
&\text{all Lie algebra relations from Equation~\ref{eq:rels}} \\
(S_{i}^{a})^{*} &= S_{i}^{a} \\
0 &= S_{j}^{a}(S_{j}^{a} - 2)(S_{j}^{a} + 2) \label{eq:sys-eig} \\
C_{4} &\geq \sum _{a, b \in [3]} (T_{ij}^{ab})^{*}T_{ij}^{ab} \label{eq:sys-eig2} \\
8(n - 1) &\geq \sum _{i = 1} ^{n} \sum _{a = 1} ^{3} (S_{i}^{a})^{*}S_{i}^{a} \label{eq:casimir-bound}
\end{align}
\end{subequations}
where $C_{4}$ is the eigenvalue of the quadratic Casimir operator, acting on the irreducible representation $\wedge^{2}(\mathbb{C}^{4})$ of $SU(4)$.

\subsection{Encoding a higher-locality Hamiltonian}
Encoding a $k$-local Hamiltonian for $k \geq 3$ is somewhat more nontrivial than in the usual case. For computing ground energies of local Hamiltonians, one typically thinks of $\mathcal{B}(\mathbb{C}^{2^{n}})$ as being generated by 1-local Pauli operators satisfying certain relations, and then a $k$-local Hamiltonian can be written as a degree-$k$ noncommutative polynomial in the 1-local Pauli generators.

In our hierarchy, while there do exist generators $S_{i}^{a}$ corresponding in some sense to 1-local Pauli operators, thinking of a $k$-local Hamiltonian $H$ as a polynomial in the $S_{i}^{a}$ will not give the correct result. The final goal is to bound the action $P_{\wedge^{2}(\mathbb{C}^{2^{n}})}(H \otimes I + I \otimes H)P_{\wedge^{2}(\mathbb{C}^{2^{n}})}$ of $H$ on the antisymmetric subspace, which is general not equal to a noncommutative polynomial in the $S_{i}^{a}$ even if $H$ is. One instead needs to write $H$ as a linear combination of Pauli terms, and then use either the map $g_{n}$ or $g_{n}'$ to write each Pauli term as an expression involving commutators in the $S_{i}^{a}$ and $T_{ij}^{ab}$. For Hamiltonians with only 1- and 2-local terms, which are some of the most common and important examples, the resulting expressions end up just being linear combinations of the generators $S_{i}^{a}$ and $T_{ij}^{ab}$.

More conceptually, the action of $H$ on the antisymmetric subspace is given by a representation of the Lie algebra $\mathfrak{su}(2^{n})$. In order to write this in a ``low-degree'' way, one should write $H$ as an element of $\mathfrak{su}(2^{n})$ using its generators and the Lie bracket, which when combined with the other relations implies that one gets the corresponding action of $H$ on the antisymmetric subspace. In contrast, writing it as a noncommutative polynomial in the $S_{i}^{a}$ treats the underlying algebra as the $*$-algebra of operators on $\mathbb{C}^{2^{n}}$, which does not give the Lie algebra action on the antisymmetric subspace.

\subsection{Proof of asymptotic convergence}
To prove asymptotic convergence of the hierarchy, it suffices (by asymptotic convergence of the general NPA hierarchy) to consider irreducible representations of the $*$-algebra defined by the generators in $V$ and corresponding relations. The proof will consist of using the Lie algebra relations to show that any irreducible representation of this algebra is also canonically an irreducible representation of the Lie algebra $\mathfrak{su}(2^{n})$, and then using the following propositions to deduce which irreducible representation it must be.

\begin{proposition}\label{prop:deg2-plethysm}
Consider $\mathbb{C}^{2^{n}}$ as the tensor product of the standard representations of $n$ copies of $SU(2)$. Then we have
\begin{align*}
\operatorname{Sym}^{2}(\mathbb{C}^{2^{n}}) &\cong \bigoplus _{\substack{v \in \{0, 2\}^{n}\\|v| = n \mod 2}} \bigboxtimes _{i = 1} ^{n} \operatorname{Sym}^{v_{i}}(\mathbb{C}^{2}) \\ 
\wedge^{2}(\mathbb{C}^{2^{n}}) &\cong \bigoplus _{\substack{v \in \{0, 2\}^{n}\\|v| = n - 1 \mod 2}} \bigboxtimes _{i = 1} ^{n} \operatorname{Sym}^{v_{i}}(\mathbb{C}^{2})
\end{align*}
as representations of $SU(2)^{n}$, where $|v|$ denotes the $L^{1}$-norm of an integer vector, and $\boxtimes _{i = 1} ^{n} V_{i}$ denotes the $SU(2)^{n}$ representation which is the tensor product of $n$ representations $V_{i}$ of $SU(2)$.
\end{proposition}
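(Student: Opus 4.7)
The plan is to combine two standard isomorphisms in sequence: the reordering isomorphism $V \otimes V \cong \bigboxtimes_{i=1}^{n}(V_{i} \otimes V_{i})$, where $V = V_{1} \otimes \cdots \otimes V_{n}$ with each $V_{i} = \mathbb{C}^{2}$, and the Clebsch--Gordan decomposition $V_{i} \otimes V_{i} \cong \operatorname{Sym}^{2}(V_{i}) \oplus \wedge^{2}(V_{i})$ on each tensor factor. The symmetric and antisymmetric parts of $V \otimes V$ are then isolated by tracking the parity of the swap action through these decompositions.

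Concretely, I would first define the $SU(2)^{n}$-equivariant isomorphism $\phi : V \otimes V \to \bigboxtimes_{i=1}^{n}(V_{i} \otimes V_{i})$ sending $(v_{1} \otimes \cdots \otimes v_{n}) \otimes (w_{1} \otimes \cdots \otimes w_{n})$ to $\bigotimes_{i}(v_{i} \otimes w_{i})$, and check on pure tensors that $\phi$ intertwines the global swap on $V \otimes V$ with the simultaneous local swap $\bigotimes_{i}\sigma_{i}$ on the right, where $\sigma_{i}$ swaps the two factors of $V_{i} \otimes V_{i}$. Next I would decompose each $V_{i} \otimes V_{i} \cong \operatorname{Sym}^{2}(V_{i}) \oplus \wedge^{2}(V_{i})$, using that $\wedge^{2}(\mathbb{C}^{2}) \cong \operatorname{Sym}^{0}(\mathbb{C}^{2}) \cong \mathbb{C}$ is the trivial representation and that $\sigma_{i}$ acts as $+1$ on $\operatorname{Sym}^{2}(V_{i})$ and as $-1$ on $\wedge^{2}(V_{i})$. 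Expanding the tensor product of these direct sums gives
\[
\bigboxtimes_{i=1}^{n}(V_{i} \otimes V_{i}) \cong \bigoplus_{\epsilon \in \{0,1\}^{n}} \bigboxtimes_{i=1}^{n} W_{\epsilon_{i}}^{(i)},
\]
where $W_{0}^{(i)} = \operatorname{Sym}^{2}(V_{i})$ and $W_{1}^{(i)} = \wedge^{2}(V_{i})$, and the simultaneous swap acts on the $\epsilon$-summand by the scalar $(-1)^{|\epsilon|}$. Since $\operatorname{Sym}^{2}(V)$ and $\wedge^{2}(V)$ are the $+1$- and $-1$-eigenspaces of the swap on $V \otimes V$, they correspond to the direct sums over $\epsilon$ with $|\epsilon|$ even and odd respectively. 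Reindexing via $v_{i} = 2 - 2\epsilon_{i} \in \{0, 2\}$, so that $\operatorname{Sym}^{v_{i}}(\mathbb{C}^{2}) = W_{\epsilon_{i}}^{(i)}$ in both cases, then yields the claimed decompositions, with the stated parity condition on $v$ arising from the corresponding parity condition on $\epsilon$.

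The main (modest) obstacle is verifying that the reordering isomorphism $\phi$ intertwines the two swap involutions, but this is a direct computation on pure tensors. All remaining steps are either standard $SU(2)$-representation theory or sign bookkeeping, and no deeper input should be needed.
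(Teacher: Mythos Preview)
Your proposal is correct and follows essentially the same approach as the paper's proof. Both arguments reorder $V\otimes V$ as $\bigboxtimes_i (V_i\otimes V_i)$, apply the Clebsch--Gordan decomposition $V_i\otimes V_i\cong\operatorname{Sym}^2(V_i)\oplus\wedge^2(V_i)$ on each factor, and then read off the symmetric and antisymmetric parts by tracking the sign of the swap; the only cosmetic difference is that the paper phrases the last step as taking $S_2$-invariants of the appropriate isotypic component, while you phrase it as picking out the $\pm 1$-eigenspaces of the swap involution.
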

\begin{proof}
By Schur-Weyl duality, we have
\[ V^{\otimes 2} \cong 1_{S_{2}} \otimes \operatorname{Sym}^{2}(V) \oplus \operatorname{sgn}_{S_{2}} \wedge^{2}(V) \]
where $1_{S_{2}}$ and $\operatorname{sgn}_{S_{2}}$ denote the trivial and sign representations of $S_{2}$. When $V \cong \mathbb{C}^{2}$ we have $\wedge^{2}(V) \cong 1 \cong \operatorname{Sym}^{0}(\mathbb{C}^{2})$, the trivial representation of $SU(2)$. This gives
\begin{align*}
&\phantom{{}\cong{}} \operatorname{Sym}^{2}(\mathbb{C}^{2^{n}}) \\
&\cong \left(\left(\bigboxtimes _{i = 1} ^{n} \mathbb{C}^{2}\right)^{\otimes 2} \otimes 1_{S_{2}}\right)^{S_{2}} \\
&\cong \left( 1_{S_{2}} \otimes \bigboxtimes _{i = 1} ^{n} (\mathbb{C}^{2})^{\otimes 2} \right)^{S_{2}} \\
&\cong \left( 1_{S_{2}} \otimes \bigboxtimes _{i = 1} ^{n} (1_{S_{2}} \otimes \operatorname{Sym}^{2}(\mathbb{C}^{2}) \oplus \operatorname{sgn}_{S_{2}} \otimes \operatorname{Sym}^{0}(\mathbb{C}^{2})) \right)^{S_{2}} \\
&\cong \left(\bigoplus _{v \in \{0, 2\}^{n}} \operatorname{sgn}_{S_{2}}^{|v|/2} \otimes \bigboxtimes _{i = 1} ^{n} \operatorname{Sym}^{v_{i}}(\mathbb{C}^{2})\right)^{S_{2}} \\
&\cong \bigoplus _{\substack{v \in \{0, 2\}^{n}\\|v| = n \mod 2}} \bigboxtimes _{i = 1} ^{n} \operatorname{Sym}^{v_{i}}(\mathbb{C}^{2}).
\end{align*}
Similarly, we have
\begin{align*}
\wedge^{2}(\mathbb{C}^{2^{n}}) &\cong \left( \operatorname{sgn}_{S_{2}} \otimes \bigboxtimes _{i = 1} ^{n} (1_{S_{2}} \otimes \operatorname{Sym}^{2}(\mathbb{C}^{2}) \oplus \operatorname{sgn}_{S_{2}} \otimes \operatorname{Sym}^{0}(\mathbb{C}^{2})) \right)^{S_{2}} \\
&\cong \bigoplus _{\substack{v \in \{0, 2\}^{n}\\|v| = n - 1 \mod 2}} \bigboxtimes _{i = 1} ^{n} \operatorname{Sym}^{v_{i}}(\mathbb{C}^{2}).
\end{align*}
\end{proof}

\begin{proposition}\label{prop:res-reps}
Suppose $\mathcal{H}$ is a Hilbert space with a unitary action of the Lie group $SU(2^{n})$. Consider the associated action of the Lie algebra $\mathfrak{su}(2^{n})$ and the restriction to the 1- and 2-local Pauli operators in the image of $f_{n}$. If the corresponding actions of the generators in $V_{n}$ satisfy Equation~\ref{eq:sys}, then $\mathcal{H} \cong \mathcal{H}' \otimes \wedge^{2}(\mathbb{C}^{2^{n}})$ as representations of $SU(2^{n})$, where $\mathcal{H}'$ has the trivial action.
\end{proposition}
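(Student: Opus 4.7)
The plan is to decompose $\mathcal{H}$ into $SU(2^{n})$-isotypic components and show that \eqref{eq:sys} forces every isotype other than $\wedge^{2}(\mathbb{C}^{2^{n}})$ to vanish. Writing $\mathcal{H}\cong\bigoplus_{\lambda}V_{\lambda}\otimes W_{\lambda}$ with $V_{\lambda}$ the irreducible $SU(2^{n})$-representation of highest weight $\lambda$ and $W_{\lambda}$ its multiplicity space carrying the trivial action, the operators $S_{i}^{a}$ and $T_{ij}^{ab}$ are fixed elements of $\mathfrak{su}(2^{n})$ (namely the images under $f_{n}$ from Section~\ref{sec:genrel}), so they act as $\rho_{\lambda}(\cdot)\otimes I_{W_{\lambda}}$ on each summand. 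Every equation or inequality in \eqref{eq:sys} therefore decouples across $\lambda$; whenever one fails on $V_{\lambda}$, we can conclude $W_{\lambda}=0$, and the goal is to show this for every $\lambda\neq(1,1)$.

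The first step is to apply the eigenvalue constraint \eqref{eq:sys-eig} to cut $\lambda$ down to a short list. In a Young tableau basis of $V_{\lambda}$ diagonalizing $S_{i}^{3}$, the eigenvalue at a semistandard tableau $T$ is $|T|_{b_{i}=0}-|T|_{b_{i}=1}$, grouping entries of $\{1,\dots,2^{n}\}$ by their $i$-th bit. Requiring these eigenvalues to lie in $\{-2,0,2\}$ for every $T$ and every $i$ forces $|\lambda|$ even and the maximum $SU(2)_{i}$-weight $\sum_{c}(2\min(h_{c},2^{n-1})-h_{c})$ (over column heights $h_{c}$ of $\lambda$) to be at most $2$. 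A brief enumeration of one- and two-column shapes satisfying both conditions then isolates the five Young diagrams referenced in the technical overview: the trivial, $\wedge^{2}(\mathbb{C}^{2^{n}})$ at $(1,1)$, $\operatorname{Sym}^{2}(\mathbb{C}^{2^{n}})$ at $(2)$, the adjoint at $(2,1^{2^{n}-2})$, and the dual of $\operatorname{Sym}^{2}$ at $(2^{2^{n}-1})$.

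The final step is to rule out each non-$\wedge^{2}$ candidate using the two remaining operator inequalities. Combining Proposition~\ref{prop:deg2-plethysm} with analogous plethysm decompositions for the other candidates gives each irrep's restriction to $SU(2)^{n}$ as a direct sum of summands $\boxtimes_{j}\operatorname{Sym}^{v_{j}}(\mathbb{C}^{2})$ with $v_{j}\in\{0,2\}$; on such a summand $\sum_{i,a}(S_{i}^{a})^{2}$ acts as $8|\{j:v_{j}=2\}|$, so \eqref{eq:casimir-bound} forbids any all-$v_{j}=2$ summand and immediately kills the adjoint together with the $\operatorname{Sym}^{2}$-type candidates whenever $n$ is even. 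For \eqref{eq:sys-eig2} I would use the expansion
\[
\sum_{a,b}(T_{ij}^{ab})^{*}T_{ij}^{ab}=18\,I+2\Bigl(\sum_{a}S_{i}^{a}\otimes S_{i}^{a}\Bigr)\Bigl(\sum_{b}S_{j}^{b}\otimes S_{j}^{b}\Bigr)
\]
arising from $(X\otimes I+I\otimes X)^{2}=2I+2X\otimes X$, and then apply the identity $\sum_{a}S_{i}^{a}\otimes S_{i}^{a}=2\operatorname{SWAP}_{i}-I$ on the $i$-th qubit pair to diagonalize the right-hand side on each $SU(2)^{n}$- (or $SU(4)_{ij}$-) summand, exhibiting in each surviving non-$\wedge^{2}$ candidate a summand on which this exceeds $C_{4}$. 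The hardest part will be ruling out the trivial representation (on which all of \eqref{eq:sys} degenerates to $0\leq C$-type inequalities) and, for $n\geq 3$, separating $\wedge^{2}(\mathbb{C}^{2^{n}})$ from its dual $\wedge^{2^{n}-2}(\mathbb{C}^{2^{n}})$, since most of the constraints are formally invariant under $V\mapsto V^{*}$; resolving these will presumably require either extracting an asymmetric consequence of \eqref{eq:sys-eig2} or restricting attention to the cyclic representations produced by the NPA/GNS construction, which pin down the $\wedge^{2}$ side. Once every non-$\wedge^{2}$ candidate is killed, the isotypic decomposition collapses to $\mathcal{H}\cong W_{(1,1)}\otimes\wedge^{2}(\mathbb{C}^{2^{n}})$ with $\mathcal{H}'\coloneqq W_{(1,1)}$ carrying the trivial action.
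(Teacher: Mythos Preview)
Your overall strategy---decompose into irreducibles, cut the list down via \eqref{eq:sys-eig}, then eliminate the survivors with \eqref{eq:casimir-bound} and \eqref{eq:sys-eig2}---is exactly the paper's. Two points of execution are off. The five shapes the paper isolates from \eqref{eq:sys-eig} are $(2)$, $(1,1)$, $(1^{m-2})$, $(2,1^{m-2})$, $(2^{m-1})$ with $m=2^n$; you replaced $(1^{m-2})$ by ``the trivial'' in your list of five and then re-introduced $\wedge^{m-2}$ as a separate worry at the end. And the caveat ``whenever $n$ is even'' is spurious: the all-$v_j=2$ summand $\boxtimes_j\operatorname{Sym}^{2}(\mathbb C^2)$ always lies inside $\operatorname{Sym}^2(\mathbb C^{2^n})$ (and inside the adjoint, and inside the dual of $\operatorname{Sym}^2$), so \eqref{eq:casimir-bound} disposes of those three candidates for every $n$, just as in the paper.

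The substantive gap is your treatment of \eqref{eq:sys-eig2}. The expansion you propose, coming from $(X\otimes I + I\otimes X)^2 = 2I + 2X\otimes X$, is specific to the Lie-algebra action on $V^{\otimes 2}$ and has no analogue on $\wedge^{m-2}(V)$ or the other higher-degree candidates; it therefore cannot be used to analyse $\sum_{a,b}(T_{ij}^{ab})^{*}T_{ij}^{ab}$ on those representations. The paper's argument for eliminating $(1^{m-2})$ is different: it restricts $\wedge^{m-2}(\mathbb C^{2^n})$ to the subgroup $SU(4)^{n_1}\times SU(2)^{n_0}$ and exhibits a summand whose $SU(4)$-tensor-factor is an irrep of degree strictly greater than $2$, on which the relevant $SU(4)$ Casimir exceeds $C_4$. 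This $SU(4)$-restriction step is precisely the ``asymmetric consequence of \eqref{eq:sys-eig2}'' you were searching for and is the main idea missing from your sketch. (Your concern about the trivial representation is legitimate; the paper's own enumeration tacitly assumes the maximal $S_i^a$-eigenvalue is exactly $2$ and so does not treat that case explicitly either.)
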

\begin{proof}
We first consider the case where $\mathcal{H}$ is an irreducible representation of $SU(2^{n})$. It suffices to prove the result in this case, since otherwise we can decompose $\mathcal{H}$ into irreducible representations and then deduce that no others can occur besides $\wedge^{2}(\mathbb{C}^{2^{n}})$. Thus let $\lambda$ be a partition with at most $2^{n} - 1$ rows indexing the irreducible representation isomorphic to $\mathcal{H}$.

For notational convenience, let $m = 2^{n}$. We first claim that Equation~\ref{eq:sys-eig} implies that $\lambda$ is either $(2)$, $(1, 1)$, $(1^{m - 2})$, $(2, 1^{m - 2})$, or $(2^{m - 1})$. (The notation $1^{k}$ or $2^{k}$ indicates that the corresponding partition has $k$ rows of size 1 or 2, respectively.) To show this, let $\lambda$ have $C$ columns with $c_{1}, \dots, c_{C}$ boxes.
On $\mathbb{C}^{m}$ the eigenvalues of $S_{i}^{a}$ are $\pm 1$ each with multiplicity $m/2$. The irreducible representation $S_{\lambda}(\mathbb{C}^{m})$ has a basis indexed by semistandard Young tableaux $T$ of shape $\lambda$ and maximum entry $2^{n}$, where each basis element is an eigenvector of the action of $S_{i}^{a}$ and the corresponding eigenvalue is $\sum _{i \in T} \lambda_{i}$, where $\lambda_{i}$ is the $i$-th eigenvalue of $S_{i}^{a}$ acting on $\mathbb{C}^{m}$. In $T$ the entries in each column must be strictly increasing and thus distinct, so the number of entries in column $i$ contributing $+1$ is at most $\min(c_{i}, m/2)$ and the contribution to a possible eigenvalue of $T$ is at most
\[ \min(c_{i}, m/2) - (c_{i} - \min(c_{i}, m/2)) = 2\min(c_{i}, m/2) - c_{i}. \]
Moreover, one can see that such a contribution is always possible by taking the entries in each column $i$ to be consecutive integers from $1$ to $c_{i}$. In particular, if there are $C \geq 3$ or more columns then there is an eigenvalue at least 3, contradicting Equation~\ref{eq:sys-eig}, so there are at most 2 columns. For the maximum eigenvalue to be exactly 2 we must have that either one column contributes 2 to the sum, which means $\lambda$ is either $(1, 1)$ or $(1^{m - 2})$, or there are two columns each contributing 1, so $\lambda$ is either $(2)$, $(2, 1^{m - 2})$, or $(2^{m - 1})$.

We now claim that Equation~\ref{eq:casimir-bound} implies that $\lambda$ cannot be $(2)$ or $(2^{m - 1})$. The highest weights of the two corresponding representations of $SU(m)$ differ by a vector whose entries are all $m$, so they give dual representations of $SU(m)$. The restrictions to $SU(2)^{n}$ are then isomorphic, and since Equation~\ref{eq:casimir-bound} only depends on the $SU(2)^{n}$ (or equivalently $\mathfrak{su}(2)^{\oplus n}$) action it suffices to just consider $(2)$.
The right-hand side of Equation~\ref{eq:casimir-bound} acts as the scalar $\sum _{i = 1} ^{n} v_{i}/2$ on the $SU(2)^{n}$ irrep $\boxtimes _{i = 1} ^{n} \operatorname{Sym}^{v_{i}}(\mathbb{C}^{2})$, so by Proposition~\ref{prop:deg2-plethysm} it will have an eigenvalue of $n$ on $S_{(2)}(\mathbb{C}^{m}) \cong \operatorname{Sym}^{2}(\mathbb{C}^{m})$.

We claim that a similar argument shows that $\lambda$ cannot be $(2, 1^{m - 2})$. It suffices to show that $S_{(2, 1^{m - 2})}(\mathbb{C}^{m})$ contains a copy of $\boxtimes _{i = 1} ^{n} \operatorname{Sym}^{2}(\mathbb{C}^{2})$. To prove this, it suffices to give a vector which is a simultaneous $+1$ eigenvector of $S_{i}^{1}$ for all $i \in [n]$, which is then a corresponding highest weight vector. Since the $S_{i}^{1}$ are simultaneously diagonalizable on $\mathbb{C}^{m}$,  we can take the basis element indexed by the Young tableau containing all elements but the simultaneous $-1$ eigenvector of the $S_{i}^{1}$ in the first column, and in the second column contains the simultaneous $+1$ eigenvector.

We finally claim that Equation~\ref{eq:sys-eig2} implies that $\lambda$ cannot be $(1^{m - 2})$. To show this, we consider $S_{(1^{m - 2})}(\mathbb{C}^{m}) \cong \wedge^{m - 2}(\mathbb{C}^{m})$ as a representation of $SU(4)^{n_{1}} \times SU(2)^{n_{0}}$, where $n = 2n_{1} + n_{0}$ and $n_{0} \in \{0, 1\}$. A similar argument as in the previous paragraph shows that it contains irreps where one of the tensor factors is an $SU(4)$ irrep with degree strictly greater than 2, implying that the Casimir operator of $SU(4)$ acts as a scalar with constant strictly greater than $C_{4}$ on that subspace, and so $\lambda$ cannot be $(1^{m - 2})$.
\end{proof}

The final step in the proof of asymptotic convergence is given by the following proposition.

\begin{proposition}
Any irreducible representation of the algebra generated by the variables $V_{n}$ satisfying the relations in Equation~\ref{eq:sys} is isomorphic to $\wedge^{2}(\mathbb{C}^{2^{n}})$, where the action of each of the variables $S_{i}^{a}$ and $T_{ij}^{ab}$ comes from the map $f_{n} : \mathfrak{f}_{n} \to \mathfrak{su}(2^{n})$ and the Lie algebra action of $\mathfrak{su}(2^{n})$ on $\wedge^{2}(\mathbb{C}^{2^{n}})$.
\end{proposition}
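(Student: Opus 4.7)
The plan is to combine Theorem~\ref{thm:genrel} with Proposition~\ref{prop:res-reps} and then use irreducibility to rule out multiplicities. Let $\rho$ be a given irreducible $*$-representation of the algebra on a Hilbert space $\mathcal{H}$. First, the eigenvalue relation~\eqref{eq:sys-eig} implies $\|\rho(S_i^a)\|\leq 2$, and the inequality~\eqref{eq:sys-eig2} implies $\|\rho(T_{ij}^{ab})\|\leq\sqrt{C_4}$, so every generator acts by a bounded operator. The Hermiticity of each $S_i^a$ together with the Lie algebra relations from Equation~\ref{eq:rels} (taken with the appropriate constant factors) then imply that the operators $i\rho(S_i^a)$ and $i\rho(T_{ij}^{ab})$ satisfy the defining relations of $\mathfrak{g}_n$, so via the isomorphism $\mathfrak{g}_n\cong\mathfrak{su}(2^n)$ of Theorem~\ref{thm:genrel} we obtain a Lie algebra homomorphism $\tilde\rho:\mathfrak{su}(2^n)\to\mathcal{B}(\mathcal{H})$ into the bounded skew-Hermitian operators on $\mathcal{H}$.

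Second, I would integrate $\tilde\rho$ to a strongly continuous unitary representation $\pi:SU(2^n)\to\mathcal{U}(\mathcal{H})$. Because $\mathfrak{su}(2^n)$ is finite-dimensional and acts by bounded skew-Hermitian operators, each exponential $\exp(\tilde\rho(X))$ is a well-defined unitary, and simple connectedness of $SU(2^n)$ extends this to a unique group homomorphism. The Peter--Weyl theorem then decomposes $\mathcal{H}$ as a Hilbert-space direct sum of finite-dimensional irreducible $SU(2^n)$-representations, and collecting isotypic components gives $\mathcal{H}\cong\bigoplus_\lambda M_\lambda\otimes V_\lambda$ with $M_\lambda$ carrying the trivial $SU(2^n)$-action. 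Since the constraints in Equation~\ref{eq:sys} are polynomial identities and operator inequalities that restrict to each isotypic summand, Proposition~\ref{prop:res-reps} implies $M_\lambda = 0$ unless $V_\lambda\cong\wedge^2(\mathbb{C}^{2^n})$, so $\mathcal{H}\cong M\otimes\wedge^2(\mathbb{C}^{2^n})$ for some multiplicity space $M$.

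Finally, I would invoke irreducibility of $\rho$ to conclude $\dim M = 1$. Every generator in $V_n$ acts on $M\otimes\wedge^2(\mathbb{C}^{2^n})$ as $I_M$ tensored with a fixed operator on the second factor, so the image of the entire $*$-algebra is contained in $I_M\otimes\mathcal{B}(\wedge^2(\mathbb{C}^{2^n}))$ and every subspace of the form $W\otimes\wedge^2(\mathbb{C}^{2^n})$ with $W\subseteq M$ is invariant. Irreducibility of $\rho$ forces $\dim M = 1$, yielding $\mathcal{H}\cong\wedge^2(\mathbb{C}^{2^n})$ with the action described. The main technical point I expect to worry about is the integration step in the possibly infinite-dimensional setting, but since $\mathfrak{su}(2^n)$ is a finite-dimensional compact Lie algebra acting by bounded skew-Hermitian operators and $SU(2^n)$ is simply connected, this reduces to standard facts about matrix Lie groups and should carry through without difficulty.
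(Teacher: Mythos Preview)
Your proposal is correct and follows essentially the same route as the paper: invoke Theorem~\ref{thm:genrel} to turn the $*$-representation into a Lie algebra representation of $\mathfrak{su}(2^n)$, pass to the group, and then apply Proposition~\ref{prop:res-reps} to force the antisymmetric irrep. The paper's own proof is considerably terser---it simply asserts the bijection between irreps of $\mathfrak{su}(2^n)$ and $SU(2^n)$ and immediately applies Proposition~\ref{prop:res-reps}---whereas you make explicit the boundedness of the generators, the integration step via simple connectedness, the Peter--Weyl decomposition, and the final multiplicity-one argument from irreducibility of $\rho$; these are all points the paper leaves implicit but which your version justifies more carefully, particularly in the a~priori infinite-dimensional setting.
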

\begin{proof}
Let $\mathcal{H}$ be a Hilbert space and identify each variable in $V_{n}$ with an operator acting on $\mathcal{H}$, which together satisfy the constraints of Equation~\ref{eq:sys}. Then by Theorem~\ref{thm:genrel}, there exists a unique Lie algebra homomorphism $\rho : \mathfrak{su}(2^{n}) \to \mathcal{H}$ such that $\rho \circ f$ sends each variable to its original action on $\mathcal{H}$ and such that $\rho(g)$ is anti-self-adjoint for each $g \in \mathfrak{su}(2^{n})$. Such irreps are in bijection with irreps of $SU(2^{n})$, and the action of each generator $S_{i}^{a}$ or $T_{ij}^{ab}$ is the corresponding Lie algebra action. The action of any Lie polynomial in the generators is again the Lie algebra action, since the generators generate all of $\mathfrak{su}(2^{n})$. By Proposition~\ref{prop:res-reps}, the additional constraints force the irrep of $\mathfrak{su}(2^{n})$ to be the antisymmetric subspace $\wedge^{2}(\mathbb{C}^{2^{n}})$.
\end{proof}

\begin{remark}[Finite convergence]
While above we have applied the standard proof of asymptotic convergence of the NPA hierarchy, it is in fact true that the hierarchy converges at a finite level. One can see this by noting that generating $\mathfrak{su}(2^{n})$ using our given generating set uses only commutators of depth at most $n$, and such commutators expand to noncommutative polynomials of degree at most $n$. In particular, a linear basis for the $*$-algebra defined by our generators is given by such commutator expressions, and standard argument then imply that the hierarchy must converge at level $n$.
\end{remark}

\section{Using the hierarchy to certify a nontrivial bound}\label{sec:example}
In this section, we show that the hierarchy can certify a nontrivial lower bound on the spectral gap of the 1-local frustration-free Hamiltonian
\[ H_{n} \coloneqq -\sum _{i = 1} ^{n} Z_{i}. \]
Of course, this particular Hamiltonian is explicitly diagonalizable on the standard basis, with any standard basis vector of Hamming weight $k$ having eigenvalue $n - 2k$, and the spectral gap is easily seen to be 2. We use it simply as an example where the hierarchy gives a nontrivial (that is, nonzero) lower bound on the spectral gap.

First, note that the constraint $Z_{i}(Z_{i} - 2)(Z_{i} + 2) = 0$ implies that any univariate polynomial in a single $Z_{i}$ is a degree-4 SOS up to the given relations if it is positive on the set $\{-2, 0, 2\}$, using standard facts about univariate polynomials. In particular, $Z_{i}^{*}Z_{i} - 2Z_{i}$ is a degree-4 SOS.

Next, we have
\begin{align*}
(X_{j} + iY_{j})^{*}(X_{j} + iY_{j}) &= X_{j}^{*}X_{j} + Y_{j}^{*}Y_{j} + iX_{j}^{*}Y_{j} - iY_{j}^{*}X_{j} \\
&= X_{j}^{*}X_{j} + Y_{j}^{*}Y_{j} + i[X_{j}, Y_{j}] \\
&= X_{j}^{*}X_{j} + Y_{j}^{*}Y_{j} - 2Z_{j}
\end{align*}
again using the given relations, thus the right-hand side is a degree-2 SOS. (Very pedantically, we have
\[ X_{j}^{*}X_{j} + Y_{j}^{*}Y_{j} - 2Z_{j} = (X_{j} + iY_{j})^{*}(X_{j} + iY_{j}) + i(X_{j} - X_{j}^{*})Y_{j} - i(Y_{j} - Y_{j}^{*})X_{j} + i(2iZ_{j} - [X_{j}, Y_{j}]) \]
as an explicit SOS expression for the right-hand side, using the given relations.) Summing the two inequalities over all $n$ gives that
\begin{align*}
&\phantom{{}={}} 8(n - 1) + 4H_{n} \\
&= 8(n - 1) - 4\sum _{j = 1} ^{n} Z_{j} \\
&= 8(n - 1) - \sum _{j = 1} ^{n} (X_{j}^{*}X_{j} + Y_{j}^{*}Y_{j} + Z_{j}^{*}Z_{j}) + \sum _{j = 1} ^{n} (X_{j}^{*}X_{j} + Y_{j}^{*}Y_{j} - 2Z_{j}) + \sum _{j = 1} ^{n} (Z_{j}^{*}Z_{j} - 2Z_{j})
\end{align*}
is a degree-4 SOS. In particular, the hierarchy certifies that the action of $H_{n}$ on the antisymmetric subspace has lowest eigenvalue at least $-2(n - 1)$, matching the true value of $-n - (n - 2)$.

To completely certify the bound, one should also bound the ground energy of $H_{n}$ from above. There are a few options for doing this. The Lasserre hierarchy of upper bounds seems to have particularly slow convergence in this example, but at a sufficiently high level will certify a tight enough upper bound to obtain a nontrivial lower bound on the spectral gap. If one is willing to assume an exact upper bound for frustration-free Hamiltonians, as is done in \cite{rketsb2024}, then combining it with the previous argument shows that the degree-4 relaxation certifies an exact lower bound on the spectral gap.

Alternatively, one can use the hierarchy of outer approximations to the singleton containing the ground state described in \cite{ffs2024}. Concretely, it is known that the hierarchy converges exactly at a finite level depending on the number of qubits. Since the given Hamiltonian is a sum of 1-local terms, taking a constant finite level of the hierarchy will give both exact upper and lower bounds on the ground state energy, which then gives an exact lower bound on the spectral gap when combined with the previous argument.

\nocite{*}
\printbibliography

\end{document}